\declaretheorem{proposition}
\declaretheorem[sibling=proposition]{lemma}
\declaretheorem[sibling=proposition]{theorem}
\declaretheorem[sibling=proposition]{corollary}
\declaretheorem[sibling=proposition,style=remark]{remark}
\newcommand{\inc}[1]{\textbf{inc$(#1)$}}
\newcommand{\dec}[1]{\textbf{dec$(#1)$}}
\newcommand{\test}[1]{\textbf{test$(#1)$}}
\def\BState{\State\hskip-\ALG@thistlm}
\newcommand{\Loopatmost}[2]{\Loop\ \textbf{at most $#1+#2$ times}}
\algrenewcommand{\algorithmiccomment}[1]{\qquad$\rightarrow$ #1}
\newcommand{\vr}[1]{\mathsf{#1}}
\newcommand{\size}[1]{\operatorname{size}(#1)}
\newcommand{\setN}{\mathbb{N}}
\title{The Reachability Problem for Petri Nets is Not Primitive Recursive%
\thanks{This research has been supported by ANR programme BraVAS (ANR-17-CE40-0028).}}
\author{%
  J\'er\^ome Leroux     \\ CNRS \& University of Bordeaux \\ \texttt{jerome.leroux@labri.fr}}
\date{}
\begin{document}


\pagenumbering{roman}

\maketitle

\newcommand{\cmd}{\textbf{cmd}\xspace}
\newcommand{\crochet}[1]{\left\llbracket#1\right\rrbracket}

\newcommand{\reduce}{\operatorname{evalF}}
\newcommand{\redmax}{\operatorname{evalF}^{\max{}}}
\newcommand{\unit}{1}
\newcommand{\zero}{0}

\begin{abstract}
  We provide an Ackermannian complexity lower bound for the reachability problem for checking programs, a model equivalent to Petri nets. Moreover in fixed dimension $2d+4$, we show that the problem is $\mathbb{F}_d$-hard. As a direct corollary, the reachability problem in dimension 10 is not elementary.
  \end{abstract}

\pagenumbering{arabic}

\section{Introduction}
Checking programs~\cite{leroux:FOCS22}, or equivalently vector addition systems with states~\citep{Hopcroft79}, or vector addition systems~\citep{DBLP:journals/jcss/KarpM69}, or Petri nets are one of the most popular formal methods for the representation and the analysis of parallel processes \cite{survey-esparza}.

\medskip

Those equivalent models are acting on counters ranging over the natural numbers thanks to increment, decrement, and test commands (only at the end of an execution). The central algorithmic problem for checking programs is reachability: given a checking program, decide whether there exists an execution from an initial configuration to a final one. Many important computational problems in logic and complexity reduce or are even equivalent to this problem~\cite{DBLP:journals/siglog/Schmitz16,hack75}. After an incomplete proof by Sacerdote and Tenney~\cite{sacerdote77}, decidability of the
problem was established by Mayr~\cite{Mayr81,Mayr84}, whose proof was
then simplified by Kosaraju~\cite{Kosaraju82}.  Building on the
further refinements made by Lambert in the 1990s~\cite{Lambert92},
in 2015, a first complexity upper bound of the reachability problem was provided~\cite{DBLP:conf/lics/LerouxS15} more than thirty years after
the presentation of the algorithm introduced by Mayr~\cite{Kosaraju82,
  Lambert92}. The upper bound given in that paper is ``cubic
Ackermannian'', i.e. in $\mathbb{F}_{\omega^3}$~(see~\cite{Schmitz16toct}).
This complexity bound is obtained by analyzing the 
Mayr algorithm.
With a refined algorithm and a new ranking function for proving
termination, an Ackermannian complexity upper bound was obtained~\cite{DBLP:conf/lics/LerouxS19}
This
means that the reachability problem can be solved in time bounded by $A(p(n))$
where $p$ is a primitive recursive function and where $A$ is the Ackermann
function. This paper
also showed that the reachability problem in fixed
dimension $d$ (the dimension is the number of used counters)
is primitive recursive by bounding the length of executions by $O(F_{d+4}(n))$ where $F_{d+4}$ is a primitive recursive function of the Grzegorczyk hierarchy~(see Section~\ref{sec:fast} for the definition of those functions), and $n$ is the size of the input.

\medskip

Concerning the complexity lower bound, in 1976, the reachability problem was proved to be ExpSpace-hard~\cite{DBLP:conf/stoc/CardozaLM76}. This bound used to be the best one for more than forty years until 2019 when it was improved to a Tower complexity lower bound~\cite{DBLP:conf/stoc/CzerwinskiLLLM19}, i.e. a non elementary complexity.

\medskip

\textbf{Contributions}. In this paper we provide an Ackermannian complexity lower bound for the reachability problem for checking programs closing the gap with the Ackermannian complexity upper bound solving a 45 years old open problem. Moreover, in fixed dimension $2d+4$ with $d\geq 3$, we prove that the reachability problem is hard for the complexity class $\mathbb{F}_d$ introduced in \cite{Schmitz16toct} associated with the function $F_d$. As a direct corollary, we derive that the reachability problem is not elementary in dimension 10.

\medskip

This paper provides the last piece leading to the exact complexity of the reachability problem for checking programs. As previously mentioned, it follows a long series of results. This piece of work is not the most difficult one but it is an important one since it closes a long standing open problem. Technically, the most difficult piece is the notion of $K$-amplifiers introduced in the Tower complexity lower bound paper~\cite{DBLP:conf/stoc/CzerwinskiLLLM19} that provides a way to postpone at the end of an execution the tests commands of a Minsky machine (a machine like a VASS but with commands that can test to zero some counters) with counters bounded by $K$.

\medskip

In this paper, we provide several gadgets for proving an $\mathbb{F}_d$ complexity lower bound for the reachability problem in dimension $2d+4$. Recently and independently, other gadgets for implementing such a bound have been introduced in order to obtain the same complexity upper bound by Wojciech Czerwiński, and Łukasz Orlikowski~\cite{DBLP:journals/corr/abs-2104-13866} in dimension $6d$, by myself~\cite{leroux:FOCS22} in dimension $4d+5$, and by Sławomir Lasota~\cite{lasota2021improved} in dimension $3d+2$. We do not know if the lower bound provided in this paper, i.e. $\mathbb{F}_d$ in dimension $2d+4$ is optimal since the complexity upper bound is $\mathbb{F}_{d+4}$ in dimension $d$. It follows that the parameterized complexity (i.e. in fixed dimension) of the reachability problem is still open. 

\medskip

We think that several different solutions to the complexity lower bound for the reachability problem is useful not only for the confidence in the claimed result but also for future work. In fact, the reachability problem for many extensions is open for almost all natural extensions except for vector addition systems with hierarchical zero tests~\cite{DBLP:conf/mfcs/Bonnet11,REINHARDT2008239}. Moreover, the best known complexity lower bounds for those models only come from checking programs. Finding gadgets that take benefits from the extra power given by the considered extensions is an open problem.

\medskip

\textbf{Outline}.
In Section~\ref{sec:fast}, we recall some properties satisfied by the fast growing functions $F_d$ and introduce a way to compute $F_d$ be iterating a reduction rule $\reduce_d$. In Section~\ref{sec:program} we introduce the model of general programs, and the subclasses of test-free models that correspond to programs that cannot test counters to zero, and the checking programs that can only test counters at the end. Whereas the reachability problem for test-free program is equivalent to the so-called \emph{coverability problem} for Petri nets (see \cite{DBLP:conf/stoc/CardozaLM76,DBLP:journals/tcs/Rackoff78} for complexity results), the reachability problem  for checking program is equivalent to the so-called reachability problem for Petri nets. In Section~\ref{sec:test} we provide tools to postpone at the end of an execution test commands of a general program. Those tools are used in Section~\ref{sec:bounded} in order to simulate the bounded semantics of general programs thanks to the so-called preamplifiers. In Section~\ref{sec:loop} we provide tools for iterating a test-free program a fixed number of times that depends on the valuation of some counters. Those tools are used in Section~\ref{sec:reduce} to implements $\reduce_d$ thanks to a test-free program. By iterating this test-free program, we provide a way in Section~\ref{sec:ack} to implement an Ackermannian preamplifier. Finally, in Section~\ref{sec:complexity} we collect intermediate results to provide complexity lower bounds for the reachability problem for checking programs.

\section{Fast Growing Functions}\label{sec:fast}
We introduce the sequence $(F_d)_{d\in\setN}$ of functions $F_d:\setN\rightarrow\setN$ defined by $F_0(n)=n+1$, and defined by induction on $d\geq 1$ by $F_d(n)=F^{n+1}_{d-1}(n)$. Rather than take the original definition of \emph{Ackermann function}, we let $A(n) = F_\omega(n)$ defined by $F_\omega(n)=F_{n+1}(n)$, which behaves like the classical function for our complexity-theoretical purpose. 
We introduce the function $F_v:\setN\rightarrow\setN$ with $v\in\setN^d$, defined as follows for every $n\in\setN$:
$$F_v(n)=F_d^{v[d]}\circ\cdots\circ F_1^{v[1]}(n)$$

\begin{remark}
  We have $F_1(n)=2n+1$ and $F_2(n)=2^{n+1}(n+1)-1$ for every $n\in\setN$. The function $F_3$ behaves like a tower of $n$ exponential and it is not elementary. Each function $F_d$ is primitive recursive and $F_\omega$ is not primitive recursive.
\end{remark}

\newcommand{\val}{\operatorname{val}}
\newcommand{\lex}{\leq_{\textrm{lex}}}
\newcommand{\slex}{<_{\textrm{lex}}}

Let us denote by $\zero_d$ the zero vector of $\setN^d$, and by $\unit_{d,i}$ the $i$th unit vector of $\setN^d$ defined for every $1\leq j\leq d$ by $\unit_{d,i}[j]=1$ if $j=i$ and $\unit_{d,i}[j]=0$ otherwise. The lexicographic (strict) order $<_{\textrm{lex}}$ over $\setN^d$ is defined by $w\slex v$ if $w\not= v$ and the maximal $p\in\{1,\ldots,d\}$ such that $w[p]\not=v[p]$ satisfies $w[p]<v[p]$.

\medskip

Let $\reduce_d:\setN^d\times \setN\rightarrow\setN^d\times\setN$ be the function partially defined over the pairs $(v,n)\in\setN^d\times\setN$ such that $v\not=0_d$ as follows where $p$ is the minimal index in $\{1,\ldots,d\}$ such that $v[p]>0$:
$$\reduce_d(v,n) =
\begin{cases}
  (v-\unit_{d,p},2n+1) & \text{ if $p=1$}\\
  (v-\unit_{d,p}+(n+1)\unit_{d,p-1},n) & \text{ if $p>1$}
\end{cases}
$$
Since $F_p(n)=F_{p-1}^{n+1}(n)$, one can easily prove (see~\cite{DBLP:conf/mfcs/Schnoebelen10} for more details) that if $v\not=\zero_d$, then the pair $(w,m)$ defined as $\reduce_d(v,n)$ satisfies $F_w(m)=F_v(n)$. It follows that the function that maps $(v,n)$ onto $F_v(n)$ is an invariant of $\reduce_d$.

\medskip

Moreover, as $w\slex v$, it follows that we can iterate the function $\reduce_d$ on a pair $(v,n)$ only a finite number of times (we use the well-foundedness of the lexicographic order). Let us introduce the function $\redmax_d:\setN^d\times \setN\rightarrow\setN^d\times \setN$ defined by $\redmax_d(v,n)=\reduce_d^k(v,n)$ where $k$ is the maximal number of times (it can be zero) we can apply the function $\reduce_d$ on $(v,n)$. Notice that $\redmax_d(v,n)$ is a pair of the form $(w,m)$ for some pair $(w,m)\in\setN^d\times\setN$. Moreover, by maximality of $k$, it follows that $w=\zero_d$. Since the function $(v,n)\mapsto F_v(n)$ is an invariant of $\reduce_d$, we deduce that $F_{\zero_d}(m)=F_v(n)$. It follows that the following equality holds~(see~\cite{DBLP:conf/mfcs/Schnoebelen10} for more details):
$$\redmax_d(v,n)=(\zero_d,F_v(n))$$

As a direct corollary, we deduce the following lemma.
\begin{lemma}\label{lem:fast}
  We have $\redmax_d((n+1)\unit_{d,d},n)=(\zero_d,F_{d+1}(n))$ for every $n\geq 0$ and $d\geq 1$.
\end{lemma}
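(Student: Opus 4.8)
The plan is to read the statement directly off the identity $\redmax_d(v,n)=(\zero_d,F_v(n))$ that was just established for arbitrary $(v,n)\in\setN^d\times\setN$, specialising it to $v=(n+1)\unit_{d,d}$. Once this is done, the only remaining task is to evaluate $F_v(n)$ for this particular choice of $v$.

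First I would observe that $(n+1)\unit_{d,d}$ has its $d$th coordinate equal to $n+1$ and every other coordinate equal to $0$. Substituting this into the definition $F_v(n)=F_d^{v[d]}\circ\cdots\circ F_1^{v[1]}(n)$, every factor $F_i^{v[i]}$ with $i<d$ is $F_i^{0}=\mathrm{id}$, so the composition collapses to $F_v(n)=F_d^{\,n+1}(n)$. By the defining recursion $F_{d+1}(n)=F_d^{\,n+1}(n)$ (which is available since $d\geq 1$), this is exactly $F_{d+1}(n)$. Combining the two steps yields $\redmax_d((n+1)\unit_{d,d},n)=(\zero_d,F_{(n+1)\unit_{d,d}}(n))=(\zero_d,F_{d+1}(n))$, which is the claim.

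I do not expect any genuine obstacle here: the lemma is a direct corollary of the termination-plus-invariant analysis of $\reduce_d$ carried out just above the statement, and the only point requiring (minimal) care is the bookkeeping in the definition of $F_v$, namely that for $v=(n+1)\unit_{d,d}$ the relevant composition reduces to a single block of $n+1$ iterations of $F_d$. In particular one does not need to trace the cascade of applications of $\reduce_d$ by hand, since the invariant $(v,n)\mapsto F_v(n)$ together with the already-established fact that $\redmax_d$ always outputs the all-zero vector does all of the work.
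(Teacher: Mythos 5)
Your proposal is correct and matches the paper's intent exactly: the paper presents this lemma as a direct corollary of the identity $\redmax_d(v,n)=(\zero_d,F_v(n))$, and the only additional step is the computation $F_{(n+1)\unit_{d,d}}(n)=F_d^{\,n+1}(n)=F_{d+1}(n)$, which you carry out correctly.
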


\newcommand{\norm}[1]{|#1|}

\medskip

The following lemma will be useful in the sequel. In that lemma, $\norm{v}$ is $v[1]+\cdots+v[d]$ for any vector $v\in\setN^d$.
\begin{lemma}\label{lem:tech}
  We have $F_v(n)\geq 2^{\norm{v}}n+\norm{v}$ for every $v\in\setN^d$ and for every $n\in\setN$.
\end{lemma}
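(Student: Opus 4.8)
The plan is to induct on the weight $\norm{v}$. The base case $\norm{v}=0$ means $v=\zero_d$, so $F_v$ is the empty composition, i.e.\ the identity, and $F_{\zero_d}(n)=n=2^{0}n+0$; equality holds. For the inductive step assume $\norm{v}\geq 1$ and let $p\in\{1,\dots,d\}$ be the least index with $v[p]>0$. Unfolding $F_v=F_d^{v[d]}\circ\cdots\circ F_1^{v[1]}$, the innermost non-identity factor applied to $n$ is one copy of $F_p$, so $F_v(n)=F_{v-\unit_{d,p}}(F_p(n))$ with $\norm{v-\unit_{d,p}}=\norm{v}-1$.

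The one auxiliary fact I need is that $F_q(m)\geq 2m+1$ for every $q\geq 1$ and every $m$. This follows from a short secondary induction on $q$: each $F_j$ is expansive ($F_j(x)\geq x$), since $F_0(x)=x+1$ and, for $j\geq 1$, $F_j(x)=F_{j-1}^{x+1}(x)\geq x$ once $F_{j-1}$ is expansive (iterates of an expansive map are expansive); granting expansiveness, $F_1(m)=F_0^{m+1}(m)=m+(m+1)=2m+1$, and for $q\geq 2$ one has $F_q(m)=F_{q-1}^{m+1}(m)\geq F_{q-1}(m)\geq 2m+1$ by the inductive hypothesis.

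Now I chain the estimates. Applying the main induction hypothesis to $v-\unit_{d,p}$ at the point $F_p(n)$, and then using $p\geq 1$ together with the auxiliary fact,
\[
  F_v(n)=F_{v-\unit_{d,p}}\bigl(F_p(n)\bigr)\;\geq\;2^{\norm{v}-1}F_p(n)+(\norm{v}-1)\;\geq\;2^{\norm{v}-1}(2n+1)+\norm{v}-1 ;
\]
here only the monotonicity of the affine map $x\mapsto 2^{\norm{v}-1}x+(\norm{v}-1)$ is used, not any property of $F_{v-\unit_{d,p}}$ itself. The right-hand side equals $2^{\norm{v}}n+2^{\norm{v}-1}+\norm{v}-1\geq 2^{\norm{v}}n+\norm{v}$, the final inequality because $\norm{v}\geq 1$ forces $2^{\norm{v}-1}\geq 1$. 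This closes the induction.

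There is no genuine obstacle; the only points to get right are the decomposition $F_v=F_{v-\unit_{d,p}}\circ F_p$ (in particular that $p\geq 1$, so that the doubling of $F_1$ propagates to every $F_p$ via expansiveness) and the empty-composition base case. Alternatively, one can sidestep the secondary induction by reusing the machinery of this section: the potential $\phi(w,m)=2^{\norm{w}}m+\norm{w}$ is non-decreasing under each application of $\reduce_d$ (a two-clause check: the $p=1$ clause uses $\norm{w}\geq 1$, the $p>1$ clause uses $n\geq 0$), whence $2^{\norm{v}}n+\norm{v}=\phi(v,n)\leq\phi(\redmax_d(v,n))=\phi(\zero_d,F_v(n))=F_v(n)$.
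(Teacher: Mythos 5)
Your main argument is correct, and it takes a genuinely different route from the paper. The paper does a well-founded induction on the lexicographic order $\slex$ and leans entirely on the invariance $F_w(m)=F_v(n)$ for $(w,m)=\reduce_d(v,n)$, then checks the two clauses of $\reduce_d$ separately ($\norm{w}=\norm{v}-1$, $m=2n+1$ when $p=1$; $\norm{w}=\norm{v}+n$, $m=n$ when $p>1$). You instead induct on the weight $\norm{v}$, peel one factor off the composition directly via $F_v=F_{v-\unit_{d,p}}\circ F_p$, and replace the case split by the single auxiliary bound $F_p(n)\geq 2n+1$ (itself correctly derived from expansiveness of each $F_j$ and $F_1(m)=2m+1$). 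Your decomposition is valid precisely because $v[j]=0$ for $j<p$, so the inner factors are identities, and the inductive hypothesis is correctly quantified over all arguments so that it can be applied at the shifted point $F_p(n)$. What your approach buys is independence from $\reduce_d$: it works straight from the definition of $F_v$ and avoids the $p>1$ unfolding entirely, at the cost of the small secondary induction. What the paper's approach buys is uniformity with the rest of Section~2: it reuses the already-established invariant of $\reduce_d$, which is the object the later sections actually manipulate. Your closing alternative --- showing $\phi(w,m)=2^{\norm{w}}m+\norm{w}$ is non-decreasing along $\reduce_d$ and evaluating $\phi$ at $\redmax_d(v,n)=(\zero_d,F_v(n))$ --- is essentially the paper's proof repackaged as a monotone potential, and the two clause checks you list are exactly the paper's two cases.
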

\begin{proof}
  Let $v\in\setN^d$ and $n\in\setN$. Assume that for every $w\in\setN^d$ such that $w\slex v$ we have $F_w(m)\geq 2^{\norm{w}}m+\norm{w}$ for every $m\in\setN$. And let us prove that in that case $F_v(n)\geq 2n^{\norm{v}}+\norm{v}$ for every $n\in\setN$. In fact, since the lexicographic order  is well-founded, by induction the proof of the lemma reduces to that statement.
  
  Notice that if $v=\zero_d$ then $F_v(n)=n$ and we are done. So, we can assume that $v\not=\zero_d$. In that case, let us introduce $(w,m)=\reduce_d(v,n)$. Since $w\slex v$, it follows that $F_w(m)\geq 2^{\norm{w}}m+\norm{w}$. Since $F_{w}(m)=F_v(n)$, we get $F_v(n)\geq 2^{\norm{w}}m+\norm{w}$. Let $p\in\{1,\ldots,d\}$ be the minimal index such that $v[p]>0$. Observe that if $p=1$ then $\norm{w}=\norm{v}-1$ and $m=2n+1$. In particular $2^{\norm{w}}m+\norm{w}\geq 2^{\norm{v}}n+\norm{v}$. If $p>1$ then $\norm{w}=\norm{v}+n$ and $m=n$. Hence $2^{\norm{w}}m+\norm{w}\geq 2^{\norm{v}}n+\norm{v}$. We have proved that $F_v(n)\geq 2^{\norm{v}}n+\norm{v}$ in any case.
\end{proof}

\section{General Programs}\label{sec:program}
We introduce in the section the formal model of programs acting on counters ranging over the natural numbers.

\medskip

Formally, an implicit infinite countable set of elements called \emph{counters} is given. A \emph{configuration} is a function $\rho$ that maps every counter on a natural number in such a way that the set of counters $\vr{c}$ such that $\rho(\vr{c})\not=0$ is finite. We denote by $0$ the configuration $\rho$ such that $\rho(\vr{c})=0$ for every counter $\vr{c}$. The \emph{increment, decrement and test commands} of a counter $\vr{c}$ are respectively denoted as \inc{\vr{c}}, \dec{\vr{c}}, and $\test{\vr{c}}$. We associate with such a command $\cmd$, the binary relation $\xrightarrow{\cmd}$ over the configurations defined by $\alpha\xrightarrow{\cmd}\beta$ if $\alpha(\vr{x})=\beta(\vr{x})$ for every counter $\vr{x}\not=\vr{c}$, and satisfying additionally:
$$\begin{cases}
  \beta(\vr{c})=\alpha(\vr{c})+1 & \text{ if $\cmd$ is \inc{\vr{c}}}\\
  \beta(\vr{c})=\alpha(\vr{c})-1 & \text{ if $\cmd$ is \dec{\vr{c}}}\\
  \beta(\vr{c})=0=\alpha(\vr{c}) & \text{ if $\cmd$ is \test{\vr{c}}}\\
\end{cases}$$

\medskip

A \emph{(general) program} $M$ is defined as an increment/decrement/test command, or inductively as a loop program $\textbf{loop}~M_0$, a series composition $M_1;M_2$, or a non-deterministic choice $M_1\textbf{ or }M_2$ where $M_0,M_1,M_2$ are programs. The \emph{size} of a program $M$ is the number $\size{M}$ defined inductively as $1$ if $M$ is a command, $1+\size{M_0}$ if $M=\textbf{loop}~M_0$, and $1+\size{M_1}+\size{M_2}$ if $M=M_1;M_2$ or $M=M_1\textbf{ or }M_2$. The \emph{dimension} of $M$ is the cardinal of the set of counters used by $M$, and formally defined as expected. We associate with every program $M$ the binary relation $\xrightarrow{M}$ over the configurations defined as follows:
$$
\xrightarrow{M}~~=~~\begin{cases}
  \xrightarrow{\textbf{cmd}} & \text{ if }M=\textbf{cmd}\text{ is a command}\\
  (\xrightarrow{M_0})^* & \text{ if }M=\textbf{loop}~M_0\\
  \xrightarrow{M_1};\xrightarrow{M_2} & \text{ if }M=M_1;M_2\\
  \xrightarrow{M_1}\cup\xrightarrow{M_2} & \text{ if }M=M_1\textbf{ or }M_2\\
\end{cases}
$$
Where $(\rightarrow_0)^*$ is the reflexive and transitive closure of  $\rightarrow_0$, and $\rightarrow_1;\rightarrow_2$ is defined by $\alpha\rightarrow_1;\rightarrow_2\beta$ if there exists a configuration $\rho$ such that $\alpha\rightarrow_1\rho$ and $\rho\rightarrow_2\beta$.
Series compositions and non-deterministic choices are clearly associative with respect to the relation $\rightarrow$. In particular given a sequence $M_1,\ldots,M_n$ of Minsky programs, the relations $\xrightarrow{M_1;\ldots;M_n}$ and $\xrightarrow{M_1\textbf{ or }\ldots\textbf{ or }M_n}$ are well defined. We denote by $M^{(n)}$ the series composition of $M$ by itself $n$ times. We also denote by $\inc{\vr{c}_1,\ldots,\vr{c}_n}$ the program $\inc{\vr{c}_1};\ldots;\inc{\vr{c}_n}$, by $\dec{\vr{c}_1,\ldots,\vr{c}_n}$ the program $\dec{\vr{c}_1};\ldots;\dec{\vr{c}_n}$, and similarly $\test{\vr{c}_1,\ldots,\vr{c}_n}$ the program $\test{\vr{c}_1};\ldots;\test{\vr{c}_n}$.

\medskip

We say that a program is \emph{test-free} if it does not use any test command. A \emph{checking program} is a program of the form $M;\test{\vr{c}_1,\ldots,\vr{c}_n}$ where $M$ is a test-free program and $\vr{c}_1,\ldots,\vr{c}_n$ are some counters. 

\medskip

The \emph{reachability problem} for general programs asks, given a general program $M$, whether there exists a configuration $\beta$ such that $0\xrightarrow{M}\beta$. This problem is undecidable~\cite{minsky1967computation} even in dimension $2$ but provides a way to define complexity classes beyond Elementary as recalled in Section~\ref{sec:complexity}. The reachability problem for test-free programs is equivalent (i.e. inter-reducible) to the so-called coverability problem for Petri nets (see~\cite{DBLP:conf/stoc/CardozaLM76,DBLP:journals/tcs/Rackoff78} for the complexity of the problem), and the reachability problem for checking programs is equivalent (i.e. inter-reducible) to the so-called reachability problem for Petri nets.

\medskip

In this paper, we use standard notions from model theory by considering counters as variables, and configurations as valuations of the variables. It means that a configuration is used to replace in an expression $e$ over the counters, each occurrence of a counter $\vr{c}$ by $\rho(\vr{c})$. We denote by $\rho(e)$ the expression we obtain this way. When $\phi$ is a constraint over the counters, we also denote by $\rho(\phi)$ the constraint we obtain by considering $\phi$ as an expression over the counters. We say that $\rho$ satisfies $\phi$ if the constraint $\rho(\phi)$ is true. For instance we say that $\rho$ satisfies $\vr{y}=2^{\vr{c}}\vr{x}$ where $\vr{y},\vr{c}$ and $\vr{x}$ are counters if $\rho(\vr{y})=2^{\rho(\vr{c})}\rho(\vr{x})$.

\newcommand{\progline}[2]{\progcore{#1}{#2}{[1]}}
\newcommand{\prog}[2]{\progcore{#1}{#2}{}}

\newcommand{\progcore}[3]{
  \begin{quote}
    \begin{tabular}{@{}l|@{}l@{}}$#1~~=~~$ &
      \begin{minipage}{13cm}
        \begin{algorithmic}#3
          #2
        \end{algorithmic}
      \end{minipage}
    \end{tabular}
  \end{quote}
}

\section{Simulating Test Commands}\label{sec:test}
We provide in this section the central idea for simulating test commands. Assume that $\vr{B}$ denotes a finite set of counters. The simulation of a test command $\test{\vr{c}}$ for some counter $\vr{c}\in\vr{B}$ is using two distinct auxiliary counters $\vr{x}$ and $\vr{y}$ not in $\vr{B}$. During the simulation, we consider configurations $\rho$ satisfying $\vr{y}\geq K\vr{x}$ where $K=\rho(\sum_{\vr{b}\in\vr{B}}\vr{b})$. Intuitively, when the configuration satisfies $\vr{y}=K\vr{x}$ it means that all the previous simulation of test commands were correct and when $\vr{y}>K\vr{x}$ is means that at least one simulation was incorrect. Those simulations are obtained by introducing the following test-free program. This program is using any implicit enumeration $\vr{b}_0,\ldots,\vr{b}_d$ of the counters in $\vr{B}$ satisfying $\vr{b}_0=\vr{c}$. Such an enumeration naturally depends on the counter $\vr{c}$.

\prog{\textbf{simtest}_{\vr{y},\vr{B},\vr{x}}(\vr{c})}{
  \Loop
  \State \dec{\vr{b}_{1}};\inc{\vr{b}_0};\dec{\vr{y}}
  \EndLoop
  \State $\vdots$
  \Loop
  \State \dec{\vr{b}_{d}};\inc{\vr{b}_{d-1}};\dec{\vr{y}}
  \EndLoop
  \Loop
  \State \dec{\vr{b}_{d-1}};\inc{\vr{b}_{d}};\dec{\vr{y}}
  \EndLoop
  \State $\vdots$
  \Loop
  \State \dec{\vr{b}_{0}};\inc{\vr{b}_{1}};\dec{\vr{y}}
  \EndLoop
  \State $\dec{\vr{x}}^{(2)}$
}

Intuitively this program is transferring the content of the counter $\vr{b}_i$ into $\vr{b}_{i-1}$ thanks to the $i$th loop with $i\in\{1,\ldots,d\}$. Then the program is transferring back the contents of the counters thanks to the last $d$ loops. During each step of the transfer, the counter $\vr{y}$ is decremented. Since $K=\sum_{\vr{b}\in\vr{B}}\vr{b}$ is an invariant, the counter $\vr{y}$ can be decremented by at most $K$ in total with the first $d$ loops, and by $K$ as well in total with the last $d$ loops. It follows that $\vr{y}$ can be decremented by at most $2K$. Moreover, the only way to decrement $\vr{y}$ by exactly $2K$ is when initially $\vr{c}=0$ and when all loops are executed the maximal number of times, meaning that the transfers are total, and in particular that the initial and the final configurations coincide on the counters in $\vr{B}$. Since $\vr{x}$ is decremented by $2$, it follows that if $\vr{y}\geq K\vr{x}$ before the execution of the program, then the same constraint holds at the end. Moreover, if additionally $\vr{y}=K\vr{x}$ at the end, then necessarily the same constraint holds at the beginning and it means that $\vr{y}$ has been decremented by $2K$, and in particular initially $\vr{c}=0$ and all the transfers were performed maximally. The following two lemmas formally proved those intuitions.

\begin{lemma}\label{lem:exist}
  Let $\alpha,\beta$ be two configurations such that $\alpha\xrightarrow{\test{\vr{c}};\dec{\vr{y}}^{(2K)}\dec{\vr{x}}^{(2)}}\beta$ with $\vr{c}\in\vr{B}$ and such that $K=\alpha(\sum_{\vr{b}\in\vr{B}}\vr{b})$. Then we have:
  $$\alpha\xrightarrow{\textbf{simtest}_{\vr{y},\vr{B},\vr{y}}(\vr{c})}\beta$$
\end{lemma}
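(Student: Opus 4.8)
The plan is to exhibit one particular execution of the test-free program $\textbf{simtest}$ above --- the one in which every loop is iterated its maximal number of times --- and to check that it leads from $\alpha$ to $\beta$. First I unpack the hypothesis: from $\alpha\xrightarrow{\test{\vr{c}};\dec{\vr{y}}^{(2K)}\dec{\vr{x}}^{(2)}}\beta$ we read off that $\alpha(\vr{c})=0$, that $\alpha(\vr{y})\geq 2K$ and $\alpha(\vr{x})\geq 2$, and that $\beta$ coincides with $\alpha$ on every counter except that $\beta(\vr{y})=\alpha(\vr{y})-2K$ and $\beta(\vr{x})=\alpha(\vr{x})-2$. So it suffices to produce an execution of the program from $\alpha$ that restores every counter of $\vr{B}$ to its $\alpha$-value, decrements $\vr{y}$ by exactly $2K$ and $\vr{x}$ by exactly $2$, and changes no other counter.

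Write $L_i$ ($1\le i\le d$) for the loop with body $\dec{\vr{b}_i};\inc{\vr{b}_{i-1}};\dec{\vr{y}}$ and $R_i$ ($0\le i\le d-1$) for the loop with body $\dec{\vr{b}_i};\inc{\vr{b}_{i+1}};\dec{\vr{y}}$, so that the program is $L_1;\cdots;L_d;R_{d-1};\cdots;R_0;\dec{\vr{x}}^{(2)}$. In the first phase I iterate $L_i$ exactly $\alpha(\vr{b}_i)$ times, for $i=1,\ldots,d$ in that order. Using $\alpha(\vr{b}_0)=\alpha(\vr{c})=0$, a short induction on $i$ shows that after $L_1,\ldots,L_i$ the counters satisfy $\vr{b}_j=\alpha(\vr{b}_{j+1})$ for $0\le j\le i-1$, $\vr{b}_i=0$, $\vr{b}_j=\alpha(\vr{b}_j)$ for $i<j\le d$, and $\vr{y}$ has been decremented by $\alpha(\vr{b}_1)+\cdots+\alpha(\vr{b}_i)$; hence after the first phase $\vr{y}$ has dropped by $\sum_{j=1}^d\alpha(\vr{b}_j)=K$. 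In the second phase I iterate $R_i$ exactly $\alpha(\vr{b}_{i+1})$ times, for $i=d-1,\ldots,0$ in that order; a symmetric induction shows that after $R_{d-1},\ldots,R_i$ one has $\vr{b}_j=\alpha(\vr{b}_{j+1})$ for $0\le j\le i-1$, $\vr{b}_i=0$, $\vr{b}_j=\alpha(\vr{b}_j)$ for $i<j\le d$, and $\vr{y}$ has dropped by a further $\sum_{j=i+1}^d\alpha(\vr{b}_j)$. After $R_0$, therefore, $\vr{b}_0=0=\alpha(\vr{c})$ and $\vr{b}_j=\alpha(\vr{b}_j)$ for $j\ge 1$, so all of $\vr{B}$ is back to its $\alpha$-value and $\vr{y}$ has dropped by $2K$ in total; the concluding $\dec{\vr{x}}^{(2)}$ lowers $\vr{x}$ by $2$. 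Comparing with the description of $\beta$, the configuration reached is exactly $\beta$.

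Finally I check that this execution is legal, i.e.\ that no counter ever becomes negative. The counters outside $\vr{B}\cup\{\vr{x},\vr{y}\}$ are never touched; $\vr{x}$ is touched only by the final $\dec{\vr{x}}^{(2)}$, which is legal since $\alpha(\vr{x})\geq 2$; each $\vr{b}_j$ only ever holds a value equal to $0$ or to some $\alpha(\vr{b}_i)$ during the transfers, so it stays nonnegative and the chosen iteration counts are attainable; and $\vr{y}$ is never incremented while the total amount ever subtracted from it along the run is exactly $2K$, so at every point $\vr{y}\geq\alpha(\vr{y})-2K\geq 0$ and every decrement of $\vr{y}$ can be performed. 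The argument is pure bookkeeping; the only real care needed is with the double-shift indexing, which is why the two phases are handled through the explicit loop-by-loop invariants rather than an informal ``transfer the contents'' description. The degenerate case $d=0$ (that is, $\vr{B}=\{\vr{c}\}$) is immediate: there are no loops, $K=\alpha(\vr{c})=0$, the program is just $\dec{\vr{x}}^{(2)}$, and the claim is clear.
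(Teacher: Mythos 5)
Your proof is correct and follows essentially the same route as the paper's: both exhibit the single "maximal" execution in which the $i$th forward loop is iterated $\alpha(\vr{b}_i)$ times and the backward loops undo the shift, verify the intermediate configuration after the first $d$ loops, and observe that the $\vr{y}$- and $\vr{x}$-decrements are harmless because $\alpha(\vr{y})\geq 2K$ and $\alpha(\vr{x})\geq 2$. Your version is merely more explicit about the loop-by-loop invariants and the legality check; no gap.
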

\begin{proof}
  We are going to prove that there exists an execution in such a way the loops are executed $m_1,\ldots,m_d,n_d,\ldots,n_1$ times (in that order), where $m_i=n_i=\alpha(\vr{b}_i)$ for every $1\leq i\leq d$. Notice that the decrement commands $\dec{\vr{y}}$ cannot prevent such an execution since $\alpha(\vr{y})\geq 2K=\sum_{i=1}^d(n_i+m_i)$. Moreover, the decrement commands $\dec{\vr{x}}^{(2)}$ are also executable at the end of the program since $\alpha(\vr{x})\geq 2$. So, for proving the existence of an execution of the program with the loops executed the right number of times, we can forget the decrement commands on the counters $\vr{y}$ and $\vr{x}$.

  For the first $d$ loops, such an execution is possible by observing that the $i$th loop with $i\in\{1,\ldots,d\}$ only decrements the counter $\vr{b}_i$ (recall that we forget the counter $\vr{y}$ for the proof) that is untouched by the previous loops. After executing those $d$ first loops, we get a configuration $\rho$ satisfying for every counter $\vr{c}$ the following equalities:
  $$\rho(\vr{c})=
  \begin{cases}
    \alpha(\vr{b}_{i+1}) & \text{ if $\vr{c}=\vr{b}_i$ with $0\leq i<d$}\\
    0 & \text{ if $\vr{c}=\vr{b}_d$}\\
    \alpha(\vr{y})-K & \text{ if }\vr{c}=\vr{y}\\
    \alpha(\vr{c}) & \text{ otherwise}
  \end{cases}$$
  Starting form $\rho$, the last $d$ loops are symmetrical to the first $d$ loops. In particular, the same argument as previously mentioned shows that there exists an execution from $\rho$ that executes the last $d$ loops the right number of times. Now, just observe that with such an execution, we have proved the lemma.
\end{proof}
 
\begin{lemma}\label{lem:forall}
  Let $\alpha,\beta$ be two configurations such that $\alpha\xrightarrow{\textbf{simtest}_{\vr{y},\vr{B},\vr{x}}(\vr{c})}\beta$ and let $K=\alpha(\sum_{\vr{b}\in\vr{B}}\vr{b})$. Then $\beta(\sum_{\vr{b}\in\vr{B}}\vr{b})=K$ and if $\alpha$ satisfies $\vr{y}\geq K\vr{x}$ then $\beta$ satisfies the same constraint. Moreover if additionally $\beta$ satisfies $\vr{y}=K\vr{x}$ then $\alpha$ satisfies the same constraint and:
  $$\alpha\xrightarrow{\test{\vr{c}};\dec{\vr{y}}^{(2K)}\dec{\vr{x}}^{(2)}}\beta$$
\end{lemma}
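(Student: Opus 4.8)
The plan is to fix one execution $\alpha\xrightarrow{\textbf{simtest}_{\vr{y},\vr{B},\vr{x}}(\vr{c})}\beta$ and record, for each of the $2d$ loops in the order they occur, how many times its body is run: call these $k_1,\dots,k_d$ for the first $d$ loops (so the $i$-th loop, $i\le d$, performs the transfer $\vr{b}_i\to\vr{b}_{i-1}$ while decrementing $\vr{y}$) and $\ell_1,\dots,\ell_d$ for the last $d$ loops (so loop $d+j$ performs $\vr{b}_{d-j}\to\vr{b}_{d-j+1}$ while decrementing $\vr{y}$). First I would record three bookkeeping facts, all immediate from the shape of the program and from $\vr{x},\vr{y}\notin\vr{B}$: every loop body increments one counter of $\vr{B}$ and decrements another, so $\sum_{\vr{b}\in\vr{B}}\vr{b}$ is an invariant and $\beta(\sum_{\vr{b}\in\vr{B}}\vr{b})=K$; the counter $\vr{x}$ is decremented exactly twice, so $\beta(\vr{x})=\alpha(\vr{x})-2$ and in particular $\alpha(\vr{x})\ge 2$; and $\vr{y}$, which is never incremented, is decremented exactly $D:=\sum_i k_i+\sum_j \ell_j$ times, while no other counter moves.

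The core of the argument is the bound $D\le 2K$, obtained phase by phase. For the first phase I would observe that loops $1,\dots,i-1$ only ever touch counters among $\vr{b}_0,\dots,\vr{b}_{i-1}$, so $\vr{b}_i$ still holds $\alpha(\vr{b}_i)$ when loop $i$ begins; since that loop only decrements $\vr{b}_i$, this forces $k_i\le\alpha(\vr{b}_i)$, hence $\sum_i k_i\le\sum_{i=1}^d\alpha(\vr{b}_i)=K-\alpha(\vr{c})$. After the first phase the configuration $\rho$ is explicit, with $\rho(\vr{b}_0)=\alpha(\vr{b}_0)+k_1$, $\rho(\vr{b}_i)=\alpha(\vr{b}_i)-k_i+k_{i+1}$ for $0<i<d$, $\rho(\vr{b}_d)=\alpha(\vr{b}_d)-k_d$, and $\sum_{\vr{b}\in\vr{B}}\rho(\vr{b})=K$. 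The same ``untouched source'' remark applied to the second phase (the source $\vr{b}_{d-j}$ of loop $d+j$ is not touched by loops $d+1,\dots,d+j-1$) gives $\ell_j\le\rho(\vr{b}_{d-j})$, hence $\sum_j \ell_j\le\sum_{i=0}^{d-1}\rho(\vr{b}_i)=K-\rho(\vr{b}_d)$. Adding up, $D\le(K-\alpha(\vr{c}))+(K-\rho(\vr{b}_d))\le 2K$, so $\beta(\vr{y})=\alpha(\vr{y})-D\ge\alpha(\vr{y})-2K$; if $\alpha$ satisfies $\vr{y}\ge K\vr{x}$, then $\beta(\vr{y})\ge K\alpha(\vr{x})-2K=K\beta(\vr{x})$, which is the second claim.

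For the last claim I would feed the extra hypothesis $\beta(\vr{y})=K\beta(\vr{x})$ into the chain just established, $K\beta(\vr{x})=\beta(\vr{y})=\alpha(\vr{y})-D\ge\alpha(\vr{y})-2K\ge K\alpha(\vr{x})-2K=K\beta(\vr{x})$, so that every inequality collapses to an equality. In particular $\alpha(\vr{y})=K\alpha(\vr{x})$, i.e.\ $\alpha$ satisfies $\vr{y}=K\vr{x}$, and $D=2K$. Reading $D=2K$ back through the estimate forces $\alpha(\vr{c})=0$, $\rho(\vr{b}_d)=0$, $\sum_i k_i=K=\sum_{i=1}^d\alpha(\vr{b}_i)$, and $\sum_j \ell_j=K=\sum_{i=0}^{d-1}\rho(\vr{b}_i)$; since the bounds $k_i\le\alpha(\vr{b}_i)$ and $\ell_j\le\rho(\vr{b}_{d-j})$ hold termwise, this yields $k_i=\alpha(\vr{b}_i)$ and $\ell_j=\rho(\vr{b}_{d-j})$ for all $i,j$, i.e.\ every transfer is total. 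A short direct computation with these maximal counts (the first phase sends $\alpha$ restricted to $\vr{B}$ to $\rho(\vr{b}_i)=\alpha(\vr{b}_{i+1})$ for $i<d$ and $\rho(\vr{b}_d)=0$, and the second phase undoes this exactly) shows that $\beta$ agrees with $\alpha$ on every counter of $\vr{B}$, that $\beta(\vr{y})=\alpha(\vr{y})-2K$ and $\beta(\vr{x})=\alpha(\vr{x})-2$, and that all other counters are unchanged. Finally $\alpha(\vr{c})=0$ makes $\test{\vr{c}}$ fireable from $\alpha$, and $\alpha(\vr{x})\ge 2$ together with $\alpha(\vr{y})=K\alpha(\vr{x})\ge 2K$ makes $\dec{\vr{y}}^{(2K)}\dec{\vr{x}}^{(2)}$ fireable, landing exactly in $\beta$; hence $\alpha\xrightarrow{\test{\vr{c}};\dec{\vr{y}}^{(2K)}\dec{\vr{x}}^{(2)}}\beta$.

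I expect the difficulty to be almost entirely bookkeeping rather than conceptual. The delicate points are getting the ``which loop touches which counters'' analysis exactly right, including the boundary behaviour at indices $0$ and $d$ (so that the termwise bounds $k_i\le\alpha(\vr{b}_i)$ and $\ell_j\le\rho(\vr{b}_{d-j})$, and the explicit form of $\rho$, are genuinely justified), and then being careful enough in the inequality chain that $D=2K$ really does pin down each individual $k_i$ and $\ell_j$; an argument that is loose at either place would fail to recover maximality of all transfers, which is precisely what the final sentence of the lemma requires.
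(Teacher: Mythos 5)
Your proposal is correct and follows essentially the same route as the paper's proof: the same two-phase decomposition, the same ``the source counter of each loop is untouched by the preceding loops of its phase'' observation giving the termwise bounds on the iteration counts, and the same collapse of all inequalities to equalities under the hypothesis $\beta(\vr{y})=K\beta(\vr{x})$. The only cosmetic difference is that the paper packages the slack in explicit nonnegative variables $r_i,s_i$ and writes $\beta(\vr{y}-K\vr{x})$ as a sum of natural numbers, whereas you run the same bookkeeping as an inequality chain.
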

\begin{proof}
  Let us denote by $m_1,\ldots,m_d,n_d,\ldots,n_1$ the number of times loops of the program are executed (in that order). Observe that the $i$th loop with $i\in\{1,\ldots,d\}$ decrements the counter $\vr{b}_i$, and that counter is untouched by the previous loops. It follows that there exists a sequence $r_1,\ldots,r_d$ of natural numbers such that $m_i=\alpha(\vr{b}_i)-r_i$ for every $i\in\{1,\ldots, d\}$. We derive the following equality by observing that $\sum_{i=0}^d\alpha(\vr{b}_i)=K$:
  $$\sum_{i=1}^d m_i=K-\alpha(\vr{b}_0)-\sum_{i=1}^d r_i$$
  
  Let us denote by $\rho$ the configuration we obtain just after executing those $d$ first loops.   Since $\sum_{i=0}^{d}\vr{b}_i$ is an invariant of every line of the program, we get $\beta(\sum_{i=0}^d\vr{b}_i)=K$ and $\rho(\sum_{i=0}^d\vr{b}_i)=K$. Now, observe that the argument used in the previous paragraph holds for the last $d$ loops. It follows that there exists a sequence $s_1,\ldots,s_d$ of natural numbers such that $n_i=\rho(\vr{b}_{i-1})-s_i$ for every $1\leq i\leq d$, and we derive the following equality:
  $$\sum_{i=1}^d n_i=K-\rho(\vr{b}_d)-\sum_{i=1}^d s_i$$

  From $\beta(\vr{y})=\alpha(\vr{y})-\sum_{i=1}^d(m_i+n_i)$ and $\beta(\vr{x})=\alpha(\vr{x})-2$, we deduce the following equalities:
  \begin{align*}
    \beta(\vr{y}-K\vr{x})
    &=
    \alpha(\vr{y}-K\vr{x}) + 2K-\sum_{i=1}^n (m_i+n_i)\\
    & = \alpha(\vr{y}-K\vr{x})+\alpha(\vr{b}_0)+\rho(\vr{b}_d)+\sum_{i=1}^d(r_i+s_i)
  \end{align*}
  
  Now, assume that $\alpha$ satisfies $\vr{y}\geq K\vr{x}$. From the previous equality, we deduce that $\beta(\vr{y}-K\vr{x})$ is a sum of natural numbers. In particular $\beta$ satisfies $\vr{y}\geq K\vr{x}$. If additionally $\beta$ satisfies $\vr{y}=K\vr{x}$ the previous equality shows that $\alpha(\vr{y}-K\vr{x})= 0$, $\alpha(\vr{b}_0)= 0$, $\rho(\vr{b}_d)=0$, and $r_i=s_i=0$ for every $1\leq i\leq d$. From $r_i=0$ for every $1\leq i\leq d$, it follows that $m_i=\alpha(\vr{c}_i)$ for every $1\leq i\leq d$. We deduce that $\rho(\vr{b}_i)=\alpha(\vr{b}_{i+1})$ for every $0\leq i<d$. Now, from $s_i=0$ for every $1\leq i\leq d$, we deduce that $n_i=\alpha(\vr{b}_{i+1})$ for every $1\leq i\leq d$. In particular $\beta(\vr{b}_i)=\alpha(\vr{b}_i)$ for every $0\leq i\leq d$. Finally, just observe that from $\sum_{i=1}^d m_i=K$ and $\sum_{i=1}^d n_i=K$, we deduce that $\beta(\vr{y})=\alpha(\vr{y})-2K$. We have proved the lemma.
\end{proof}

\section{Bounded Semantics and Preamplifiers}\label{sec:bounded}
We introduce in this section the $K$-bounded semantics of general programs, where $K$ is a natural number. Intuitively this semantics is obtained by bounding the sum of the counters by $K$. We also introduce the notion of $K$-preamplifiers that provides a way to simulate the $K$-bounded semantics of general programs thanks to checking programs equipped with the classical (unbounded) semantics.

\medskip

More formally, we say that a configuration $\rho$ is $K$-bounded for some $K\in\setN$ if $\sum_{\vr{c}}\rho(\vr{c})\leq K$. Denoting by $\textrm{Conf}_{\leq K}$ the set of $K$-bounded configurations, we define the binary relation $\xrightarrow{M}_{\leq K}$ over $\textrm{Conf}_{\leq K}$ where $M$ is a general program inductively as follows:
$$
\xrightarrow{M}_{\leq K} ~~=~~
\begin{cases}
  (\xrightarrow{\textbf{cmd}})\cap (\textrm{Conf}_{\leq K}\times \textrm{Conf}_{\leq K}) & \text{ if $M=\textbf{cmd}$ is a command}\\
  (\xrightarrow{M_0}_{\leq K})^* \cap (\textrm{Conf}_{\leq K}\times \textrm{Conf}_{\leq K}) & \text{ if $M=\textbf{loop}~M_0$}\\
  \xrightarrow{M_1}_{\leq K};\xrightarrow{M_2}_{\leq K} & \text{ if $M=M_1;M_2$}\\
  \xrightarrow{M_1}_{\leq K}\cup\xrightarrow{M_2}_{\leq K} & \text{ if $M=M_1\textbf{ or }M_2$}
\end{cases}$$

Intuitively $\alpha\xrightarrow{M}_{\leq K}\beta$ for two configurations $\alpha,\beta$ if, and only if, there exists an execution of $M$ such that every visited configuration including $\alpha$ and $\beta$, during the computation is $K$-bounded. The relation $\xrightarrow{M}_{\leq K}$ is called the \emph{$K$-bounded semantics} of $M$.

\medskip

The $K$-bounded semantics can be simulated by checking programs thanks to the so-called $K$-preamplifiers. A $K$-preamplifier~\cite{DBLP:conf/stoc/CzerwinskiLLLM19} for a triple of counters $(\vr{x},\vr{y},\vr{b})$ is a checking program $A$ such that:
\begin{itemize}
\item For every configuration $\beta$ such that $0\xrightarrow{A}\beta$ we have $\vr{y}\geq \vr{b}\vr{x}$ and $\beta(\vr{c})=0$ for every counter $\vr{c}\not\in\{\vr{x},\vr{y},\vr{b}\}$. Moreover, if $\beta$ satisfies $\vr{y}=\vr{b}\vr{x}$ then $\beta(\vr{b})=K$.
\item For every $\ell\geq 1$ there exists a configuration $\beta$ satisfying $0\xrightarrow{A}\beta$,  $\vr{y}=\vr{b}\vr{x}$, and $\vr{x}=\ell$.  
\end{itemize}

\begin{remark}
  A $K$-amplifier~\cite{DBLP:conf/stoc/CzerwinskiLLLM19} for a triple of counters $(\vr{x},\vr{y},\vr{b})$ is a checking program $A$ such that for any configuration $\beta$, we have $0\xrightarrow{A}\beta$ if, and only if, there exists $\ell>0$ such that $\beta(\vr{x},\vr{y},\vr{b})=(\ell,K\ell,K)$ and such that $\beta(\vr{c})=0$ for any counter $\vr{c}\not\in\{\vr{x},\vr{y},\vr{b}\}$. From a $K$-preamplifier, one can compute a $K$-amplifier by introducing some additional counters. We do not introduce $K$-amplifiers in this paper in order to reduce the number of counters used by the simulation. 
\end{remark}

Given a $K$-preamplifier $A$ and a general program $M$ of dimension $d$, we can compute in time $\size{A}+O(d\size{M})$ a checking program $A\rhd M$ such that for any configuration $\beta$, we have $0\xrightarrow{M}_{\leq K}\beta$, if, and only if, $0\xrightarrow{A\rhd M}\beta$. It follows that $K$-preamplifiers provide a way to postpone at the end of an execution test commands of general programs equipped with the $K$-bounded semantics. The size of $A\rhd M$ is $\size{A}+O(d\size{M})$. Concerning the dimension of $A\rhd M$, let us first classify the counters used by $A$. We say that a counter $\vr{c}$ used by the preamplifier $A$ is \emph{unsafe} (for the simulation), if it belongs to $\{\vr{x},\vr{y},\vr{b}\}$ or if it occurs in a test command at the end of $A$, and let us say it is \emph{safe} if it is not unsafe. Then denoting by $u$ and $s$ respectively the number of unsafe and safe counters of $A$, the dimension of $A\rhd M$ is equal to $u+\max(s,d)$.

\medskip

The checking program $A\rhd M$ is obtained as follows. By renaming the counters of $A$, we can assume without loss of generality that the unsafe counters of $A$ are disjoint from the counters used by $M$. Moreover, with such a renaming we can additionally assume that the cardinal of the safe counters of $A$ union the counters used by $M$ is $\max(s,d)$. We assume that $A$ is a checking program of the form $A';\test{\vr{c}_1,\ldots,\vr{c}_n}$ where $A'$ is a test-free program. We can assume that $\vr{b}$ is not in $\{\vr{c}_1,\ldots,\vr{c}_n\}$ since otherwise $K=0$ and in this case $A$ can by replaced by the $0$-preamplifier $\textbf{loop}~\inc{\vr{x}}$. We denote by $\vr{B}$ the counters used by $M$ union $\{\vr{b}\}$. By adding a test command $\test{\vr{c}}$ for some counter $\vr{c}$ used by $M$, we can assume that $M$ starts with a test command (notice that if $M$ is not using any counter, then $M$ is the empty program and the construction $A\rhd M$ can be defined as $M$).

\medskip

We introduce the test-free program $M'$ obtained from $M$ by replacing each increment command $\inc{\vr{c}}$ by $\inc{\vr{c}};\dec{\vr{b}}$, each decrement command $\dec{\vr{c}}$ by $\dec{\vr{c}};\inc{\vr{b}}$, and each test command $\test{\vr{c}}$ by $\textbf{simtest}_{\vr{x},\vr{B},\vr{y}}(\vr{c})$. The checking program $A\rhd M$ is then defined as follows:
\prog{A\rhd M}{
  \State A'
  \State M'
  \Loop
  \State $\dec{\vr{b}}$
  \EndLoop
  \State \test{\vr{x},\vr{y},\vr{b},\vr{c}_1,\ldots,\vr{c}_n}
}

For every configuration $\beta$, we have $0\xrightarrow{M}_{\leq K}\beta$ if, and only if, $0\xrightarrow{A\rhd M}\beta$. A formal proof for a variant construction of $A\rhd M$ (that cannot reuse safe counters of $A$, that introduces several additional test commands, and that uses additional counters) is given in~\cite{DBLP:conf/stoc/CzerwinskiLLLM19}. In the next two paragraphs, we just recall briefly the key ingredients used for the formal proof.

For one direction, assume that $0\xrightarrow{M}_{\leq K}\beta$ for some configuration $\beta$ and let us consider an execution of $M$ from $0$ to $\beta$ such that every visited configuration including $\beta$ is $K$-bounded. Denoting by $m$ the number of times this execution is using a test command, we introduce $\ell=2m$. Since $M$ starts with a test command, it follows that $\ell>0$. We consider an execution of $A'$ that leads to a configuration $\rho$ such that $\rho(\vr{x},\vr{y},\vr{b})=(\ell,K\ell,K)$ and $\rho(\vr{c})=0$ for every counter $\vr{c}\not\in\{\vr{x},\vr{y},\vr{b}\}$. Observe that $\rho$ satisfies $K=\rho(\sum_{\vr{c}\in\vr{B}}\vr{c})$. From the execution of $M$, we derive an execution of $M'$ from $\rho$ to a configuration $\delta$ satisfying $\delta(\vr{x})=0$, $\delta(\vr{y})=0$, and $\delta(\vr{c})=\beta(\vr{c})$ for every counter $\vr{c}$ used by $M$. In fact, every time a test command $\test{\vr{c}}$ is executed, it is simulated by $\textbf{simtest}_{\vr{x},\vr{B},\vr{y}}(\vr{c})$ using Lemma~\ref{lem:exist} that decrements $\vr{x}$ by $2$, and $\vr{y}$ by $2K$. Finally, from $\delta$ we iterate the last loop exactly $\delta(\vr{b})$ times in such a way we get the configuration $\beta$. This configuration can then execute the last test commands of $A\rhd M$.

\medskip

For the other direction, assume that $0\xrightarrow{A\rhd M}\beta$ for some configuration $\beta$ and let us consider an execution of $A\rhd M$ witnessing that property. In $A\rhd M$ the test-free program $A'$ is executed first and the test commands $\test{\vr{c}_1,\ldots,\vr{c}_n}$ of $A$ are postponed at the end of $A\rhd M$. Since those tested counters are no longer used in between, the execution of $A'$ can only produce from the zero configuration a configuration $\rho$ satisfying $\vr{y}\geq \vr{b}\vr{x}$ and $\rho(\vr{c})=0$ for every counter $\vr{c}\not\in\{\vr{x},\vr{y},\vr{b}\}$. We denote by $\delta$ the configuration obtained from $\rho$ after executing $M'$. Let $K'=\rho(\vr{b})$. Assume by contradiction that $\rho$ satisfies the strict constraint $\vr{y}>K'\vr{x}$. Lemma~\ref{lem:forall} shows that $\delta$ satisfies the same strict inequality. In particular $\beta(\vr{y})=\delta(\vr{y})>0$ and the last test command $\test{\vr{y}}$ fails on $\beta$. We get a contradiction. It follows that $\delta$ satisfies the equality $\vr{y}=\vr{b}\vr{x}$ and since $A$ is a $K$-preamplifier, we get $K'=K$ in that case. Lemma~\ref{lem:forall} shows that $\delta$ satisfies $\vr{y}\geq K\vr{x}$. Moreover, since $\beta(\vr{y})=\beta(\vr{x})=0$ we deduce that $\delta$ satisfies the equality $\vr{y}=K\vr{x}$. From Lemma~\ref{lem:forall} we deduce that every execution of $\textbf{simtest}_{\vr{x},\vr{B},\vr{y}}(\vr{c})$ in $M'$ has the same effect as the execution of $\test{\vr{c}};\dec{\vr{y}}^{(2K)};\dec{\vr{x}}^{(2)}$. From the execution of $M'$ we deduce that $\alpha\xrightarrow{M}_{\leq K}\beta$.

\medskip

It follows that the $K$-bounded semantics of general programs can be simulated by checking programs of small size as soon as there exists small size $K$-preamplifiers.

\section{Loop at Most}\label{sec:loop}
In this section we present a way to iterate test-free programs a numbers of times that depends on the valuation of some counters. Given two distinct counters $\vr{c}$ and $\vr{c}'$, and a test-free program $M$, we introduce the following test-free program:
\prog{\textbf{Loop at most $\vr{c}+\vr{c}'$ times $M$}}{
  \Loop
  \State \dec{\vr{c}};\inc{\vr{c}'}
  \EndLoop
  \Loop
  \State \dec{\vr{c}'};\inc{\vr{c}};M
  \EndLoop
}

Let us denote by $\vr{C}$ the counter expression $\vr{c}+\vr{c}'$. In the following lemmas, we assume that $M$ is any test-free program that does not use the counters $\vr{c}$ and $\vr{c}'$.

\begin{lemma}\label{lem:loopatmost}
  For every configuration $\alpha,\beta$, we have:
  $$\alpha\xrightarrow{\textbf{Loop at most $\vr{c}+\vr{c}'$ times $M$}}\beta$$
  if, and only if, there exists $n,\ell\in\setN$ such that $n\leq \alpha(\vr{c})$, $\ell\leq n+\alpha(\vr{c}')$, and such that:
  $$\alpha\xrightarrow{(\dec{\vr{c}};\inc{\vr{c}'})^{(n)};(\dec{\vr{c}'};\inc{\vr{c}};M)^{(\ell)}}\beta$$
  In particular $\ell\leq \alpha(\vr{C})$. If this inequality is an equality then $\beta(\vr{c})=\alpha(\vr{C})$ and $\beta(\vr{c}')=0$.
\end{lemma}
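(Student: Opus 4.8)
The plan is to unfold the definition of \textbf{Loop at most $\vr{c}+\vr{c}'$ times $M$} as two consecutive loops and analyze each one. The first loop repeatedly executes $\dec{\vr{c}};\inc{\vr{c}'}$, so any execution of it corresponds to choosing some $n\in\setN$ with $n\le\alpha(\vr{c})$ and running $(\dec{\vr{c}};\inc{\vr{c}'})^{(n)}$; this is immediate because $\vr{c}$ is the only counter constraining how often the body fires, and $\inc{\vr{c}'}$ never blocks. The second loop repeatedly executes $\dec{\vr{c}'};\inc{\vr{c}};M$, and since $M$ does not touch $\vr{c}$ or $\vr{c}'$, the number $\ell$ of iterations is constrained only by the available value of $\vr{c}'$ at the start of the second loop, which is $\alpha(\vr{c}')+n$. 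Composing the two, $\alpha\xrightarrow{\textbf{Loop at most}\ \vr{c}+\vr{c}'\ \textbf{times}\ M}\beta$ holds iff there are $n,\ell$ with $n\le\alpha(\vr{c})$, $\ell\le n+\alpha(\vr{c}')$, and $\alpha\xrightarrow{(\dec{\vr{c}};\inc{\vr{c}'})^{(n)};(\dec{\vr{c}'};\inc{\vr{c}};M)^{(\ell)}}\beta$. This establishes the ``if and only if'' claim.

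From $\ell\le n+\alpha(\vr{c}')$ and $n\le\alpha(\vr{c})$ we get $\ell\le\alpha(\vr{c})+\alpha(\vr{c}')=\alpha(\vr{C})$, giving the stated inequality. For the equality case, I would track the value of $\vr{C}=\vr{c}+\vr{c}'$ through the execution: the first loop body $\dec{\vr{c}};\inc{\vr{c}'}$ leaves $\vr{c}+\vr{c}'$ unchanged, and the second loop body $\dec{\vr{c}'};\inc{\vr{c}};M$ also leaves $\vr{c}+\vr{c}'$ unchanged because $M$ does not use $\vr{c},\vr{c}'$. Hence $\beta(\vr{C})=\alpha(\vr{C})$. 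Now, if $\ell=\alpha(\vr{C})$, then since $\ell\le n+\alpha(\vr{c}')\le\alpha(\vr{c})+\alpha(\vr{c}')=\alpha(\vr{C})$, both inequalities are equalities, so $n=\alpha(\vr{c})$; after the first loop $\vr{c}=0$ and $\vr{c}'=\alpha(\vr{C})$, and the second loop runs exactly $\alpha(\vr{C})$ times, each iteration transferring one unit back from $\vr{c}'$ to $\vr{c}$, ending with $\vr{c}'=0$ and $\vr{c}=\alpha(\vr{C})$. Thus $\beta(\vr{c})=\alpha(\vr{C})$ and $\beta(\vr{c}')=0$.

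I do not expect any genuine obstacle here; the lemma is essentially a bookkeeping statement about how the two loops pass a ``token budget'' of size $\alpha(\vr{C})$ around between $\vr{c}$ and $\vr{c}'$. The only point requiring a little care is making the correspondence between ``executions of $\textbf{loop}~M_0$'' and ``executions of $M_0^{(k)}$ for some $k$'' precise, which follows directly from the definition of $(\xrightarrow{M_0})^*$ as the reflexive–transitive closure, together with the fact that in the first loop the body strictly decreases $\vr{c}$ (so iteration counts are finite and well-defined) and that $M$'s isolation from $\vr{c},\vr{c}'$ lets us separate the $M$-effects from the counter-transfer effects in the second loop. The invariance of $\vr{c}+\vr{c}'$ is the one structural fact doing the real work in the ``equality'' half of the statement.
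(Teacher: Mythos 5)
Your proof is correct and follows essentially the same route as the paper's: unroll the two loops into iteration counts $n$ and $\ell$, bound them by noting each body strictly decrements $\vr{c}$ (resp.\ $\vr{c}'$), and use the invariance of $\vr{c}+\vr{c}'$ together with the chain $\ell\leq n+\alpha(\vr{c}')\leq\alpha(\vr{C})$ to handle the equality case. The only cosmetic difference is that the paper derives $n=\alpha(\vr{c})$ from $\beta(\vr{c}')=n-\alpha(\vr{c})\geq 0$ rather than from the inequality chain, which is equivalent.
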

\begin{proof}
  We denote by $N$ and $N_{n,\ell}$ the following test-free programs where $n,\ell\in\setN$:
  \progline{N}{
    \Loop
    \State \dec{\vr{c}};\inc{\vr{c}'}\label{loop:1}
    \EndLoop
    \Loop
    \State \dec{\vr{c}'};\inc{\vr{c}};M\label{loop:2}
    \EndLoop
  }
  
  \prog{N_{n,\ell}}{
    \State $(\dec{\vr{c}};\inc{\vr{c}'})^{(n)}$
    \State $(\dec{\vr{c}'};\inc{\vr{c}};M)^{(\ell)}$
  }
  
  Let $\alpha,\beta$ be two configurations such that $\alpha\xrightarrow{N}\beta$. We fix some execution witnessing that property. Let $n$ be the number of times line~\ref{loop:1} is executed and $\ell$ be the number of times line~\ref{loop:2} is executed. Since each execution of line~\ref{loop:1} decrements $\vr{c}$, we deduce that after executing the first loop we get a configuration $\rho$ satisfying $\rho(\vr{c})=\alpha(\vr{c})-n$ and $\rho(\vr{c}')=\alpha(\vr{c}')+n$. In particular $n\leq \alpha(\vr{c})$. Symmetrically, since each execution of line~\ref{loop:2} decrements $\vr{c}'$, we deduce that $\beta(\vr{c}')=\rho(\vr{c}')-\ell=\alpha(\vr{c}')+n-\ell$ and $\beta(\vr{c})=\rho(\vr{c})+\ell=\alpha(\vr{c})-n+\ell$. In particular $\ell\leq n+\alpha(\vr{c}')$. Observe that $\alpha\xrightarrow{N_{n,\ell}}\beta$. Moreover, we have $\ell\leq \alpha(\vr{c})$. Observe that if $\ell=\alpha(\vr{C})$, then from $\beta(\vr{c}')=\alpha(\vr{c}')+n-\ell$ we get $\beta(\vr{c}')=n-\alpha(\vr{c})$. In particular $n\geq \alpha(\vr{c})$ and with $n\leq \alpha(\vr{c})$ we get $n=\alpha(\vr{c})$. We deduce that $\beta(\vr{c})=\alpha(\vr{C})$ and $\beta(\vr{c}')=0$.

  Conversely, let $\alpha,\beta$ be two configurations such that $\alpha\xrightarrow{N_{n,\ell}}\beta$ for two natural numbers $n,\ell\in\setN$ such that $n\leq\alpha(\vr{c})$ and $\ell\leq n+\alpha(\vr{c}')$. Just observe that from the execution witnessing $\alpha\xrightarrow{N_{n,\ell}}\beta$, we deduce an execution witnessing $\alpha\xrightarrow{N}\beta$ by executing the first loop $n$ times and the second loop $\ell$ times.
\end{proof}

\begin{lemma}\label{lem:loopatmostinc}
  For every configuration $\alpha,\beta$, we have:
  $$\alpha\xrightarrow{\textbf{Loop at most $\vr{c}+\vr{c}'$ times $(\inc{\vr{c}};M)$}}\beta$$
  if, and only if, there exists $n,\ell\in\setN$ such that $n\leq \alpha(\vr{c})$, $\ell\leq n+\alpha(\vr{c}')$, and such that:
  $$\alpha\xrightarrow{(\dec{\vr{c}};\inc{\vr{c}'})^{(n)};(\dec{\vr{c}'};\inc{\vr{c}}^{(2)};M)^{(\ell)}}\beta$$
  In particular $\ell\leq \alpha(\vr{C})$ and $\beta(\vr{C})\leq 2\alpha(\vr{C})$. If one of those two inequalities is an equality then $\ell=\alpha(\vr{C})$, $\beta(\vr{c})=2\alpha(\vr{C})$, and $\beta(\vr{c}')=0$.
\end{lemma}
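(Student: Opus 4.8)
The plan is to re-run the proof of Lemma~\ref{lem:loopatmost} almost verbatim, keeping track of the extra increment on $\vr{c}$. By unfolding the definition of $\textbf{Loop at most}$, the program in the statement is the sequential composition of a first loop with body $\dec{\vr{c}};\inc{\vr{c}'}$ and a second loop with body $\dec{\vr{c}'};\inc{\vr{c}}^{(2)};M$; call this program $N$, and for $n,\ell\in\setN$ let $N_{n,\ell}=(\dec{\vr{c}};\inc{\vr{c}'})^{(n)};(\dec{\vr{c}'};\inc{\vr{c}}^{(2)};M)^{(\ell)}$. Since $M$ uses neither $\vr{c}$ nor $\vr{c}'$, only the displayed increment and decrement commands act on these two counters, exactly as in Lemma~\ref{lem:loopatmost}.

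For the forward direction, I fix an execution witnessing $\alpha\xrightarrow{N}\beta$ and let $n$ and $\ell$ be the numbers of iterations of the first and second loop. The first loop monotonically decrements $\vr{c}$, so after it one reaches a configuration $\rho$ with $\rho(\vr{c})=\alpha(\vr{c})-n\geq 0$ and $\rho(\vr{c}')=\alpha(\vr{c}')+n$, whence $n\leq\alpha(\vr{c})$. The second loop monotonically decrements $\vr{c}'$ and increments $\vr{c}$ by two per iteration, so $\beta(\vr{c}')=\alpha(\vr{c}')+n-\ell\geq 0$ and $\beta(\vr{c})=\alpha(\vr{c})-n+2\ell$, whence $\ell\leq n+\alpha(\vr{c}')$; and clearly $\alpha\xrightarrow{N_{n,\ell}}\beta$. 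Conversely, given $n,\ell$ with $n\leq\alpha(\vr{c})$ and $\ell\leq n+\alpha(\vr{c}')$ together with an execution witnessing $\alpha\xrightarrow{N_{n,\ell}}\beta$, running the first loop $n$ times (legal since $n\leq\alpha(\vr{c})$) and then the second loop $\ell$ times (legal since the value of $\vr{c}'$ after the first loop is $\alpha(\vr{c}')+n\geq\ell$, and the steps of $M$ stay executable because $M$ does not touch $\vr{c},\vr{c}'$) reproduces the same execution, so $\alpha\xrightarrow{N}\beta$.

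It remains to derive the quantitative claims. From $\ell\leq n+\alpha(\vr{c}')\leq\alpha(\vr{c})+\alpha(\vr{c}')$ we get $\ell\leq\alpha(\vr{C})$. Summing the two equalities above gives $\beta(\vr{C})=\beta(\vr{c})+\beta(\vr{c}')=\alpha(\vr{c})+\alpha(\vr{c}')+\ell=\alpha(\vr{C})+\ell$, hence $\beta(\vr{C})\leq 2\alpha(\vr{C})$, and each of the two displayed inequalities is an equality precisely when $\ell=\alpha(\vr{C})$. In that case $\beta(\vr{c}')=\alpha(\vr{c}')+n-\alpha(\vr{C})=n-\alpha(\vr{c})\geq 0$ forces $n=\alpha(\vr{c})$, so $\beta(\vr{c}')=0$ and therefore $\beta(\vr{c})=\beta(\vr{C})=2\alpha(\vr{C})$.

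I do not expect any genuine obstacle: the argument is a transcription of the proof of Lemma~\ref{lem:loopatmost} with $\inc{\vr{c}}$ replaced by $\inc{\vr{c}}^{(2)}$ in the second loop. The only point requiring a little care is that the quantity $\vr{C}=\vr{c}+\vr{c}'$, which is conserved by the second loop in Lemma~\ref{lem:loopatmost}, now grows linearly, with $\beta(\vr{C})=\alpha(\vr{C})+\ell$; this identity is exactly what yields both the bound $\beta(\vr{C})\leq 2\alpha(\vr{C})$ and the equality analysis.
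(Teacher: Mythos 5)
Your proof is correct and follows exactly the route the paper intends: the paper's own ``proof'' of this lemma is just a pointer to the proof of Lemma~\ref{lem:loopatmost}, and your argument is precisely that proof transcribed with the doubled increment, together with the new conservation identity $\beta(\vr{C})=\alpha(\vr{C})+\ell$ that drives the equality analysis. No gaps.
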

\begin{proof}
  The proof is very similar to the proof of Lemma~\ref{lem:loopatmost}.
\end{proof}

\begin{lemma}\label{lem:loopatmostdec}
  For every configuration $\alpha,\beta$, we have:
  $$\alpha\xrightarrow{\textbf{Loop at most $\vr{c}+\vr{c}'$ times $(\dec{\vr{c}'};M)$}}\beta$$
  if, and only if, there exists $n,\ell\in\setN$ such that $n\leq \alpha(\vr{c})$, $\ell\leq \frac{n+\alpha(\vr{c}')}{2}$, and such that:
  $$\alpha\xrightarrow{(\dec{\vr{c}};\inc{\vr{c}'})^{(n)};(\dec{\vr{c}'};\inc{\vr{c}};\dec{\vr{c}'};M)^{(\ell)}}\beta$$
  In particular $\ell\leq \frac{\alpha(\vr{C})}{2}$. If this inequality is an equality then $\beta(\vr{c})=\frac{\alpha(\vr{C})}{2}$, and $\beta(\vr{c}')=0$.
\end{lemma}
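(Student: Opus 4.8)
The plan is to follow the proof of Lemma~\ref{lem:loopatmost} almost verbatim, the only new feature being that each pass through the second loop now performs $\dec{\vr{c}'}$ twice before running $M$. Write $N$ for the program in the statement, and for $n,\ell\in\setN$ let $N_{n,\ell}$ be the program $(\dec{\vr{c}};\inc{\vr{c}'})^{(n)};(\dec{\vr{c}'};\inc{\vr{c}};\dec{\vr{c}'};M)^{(\ell)}$. Throughout I use that $M$ touches neither $\vr{c}$ nor $\vr{c}'$, so only the explicit commands on these two counters matter when tracking their values along an execution.

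For the left-to-right direction, fix an execution witnessing $\alpha\xrightarrow{N}\beta$ and let $n$ (resp.\ $\ell$) be the number of times the body of the first (resp.\ second) loop is executed. Each pass through the first loop decrements $\vr{c}$ and increments $\vr{c}'$, so after the first loop we reach a configuration $\rho$ with $\rho(\vr{c})=\alpha(\vr{c})-n$, $\rho(\vr{c}')=\alpha(\vr{c}')+n$, and $\rho$ agreeing with $\alpha$ elsewhere; in particular $n\leq\alpha(\vr{c})$. Each pass through the second loop decrements $\vr{c}'$ by $2$ and increments $\vr{c}$ by $1$, hence $\beta(\vr{c}')=\rho(\vr{c}')-2\ell=\alpha(\vr{c}')+n-2\ell$ and $\beta(\vr{c})=\rho(\vr{c})+\ell=\alpha(\vr{c})-n+\ell$. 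As $\beta(\vr{c}')\geq 0$ we get $\ell\leq\frac{n+\alpha(\vr{c}')}{2}$, and the very same list of commands witnesses $\alpha\xrightarrow{N_{n,\ell}}\beta$. Chaining the two bounds gives $\ell\leq\frac{n+\alpha(\vr{c}')}{2}\leq\frac{\alpha(\vr{C})}{2}$; and if $\ell=\frac{\alpha(\vr{C})}{2}$ then both inequalities are tight, so $n=\alpha(\vr{c})$ and $2\ell=n+\alpha(\vr{c}')$, whence $\beta(\vr{c}')=0$ and $\beta(\vr{c})=\ell=\frac{\alpha(\vr{C})}{2}$.

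For the converse, assume $\alpha\xrightarrow{N_{n,\ell}}\beta$ with $n\leq\alpha(\vr{c})$ and $\ell\leq\frac{n+\alpha(\vr{c}')}{2}$, and reconstruct an execution of $N$ by running the first loop exactly $n$ times (legal since $n\leq\alpha(\vr{c})$, reaching $\rho$ with $\rho(\vr{c}')=\alpha(\vr{c}')+n$) and then the second loop exactly $\ell$ times. The one place where the doubled decrement matters is here: entering the $j$th pass through the second loop ($1\leq j\leq\ell$) the value of $\vr{c}'$ equals $\alpha(\vr{c}')+n-2(j-1)$, and this pass is executable precisely when that value is at least $2$ (it must survive two successive $\dec{\vr{c}'}$); the worst case is $j=\ell$, which demands $\alpha(\vr{c}')+n-2\ell\geq 0$, i.e.\ exactly the hypothesis $\ell\leq\frac{n+\alpha(\vr{c}')}{2}$. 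Since the commands executed are exactly those of $N_{n,\ell}$ in the same order, we conclude $\alpha\xrightarrow{N}\beta$. The only obstacle beyond the bookkeeping of Lemma~\ref{lem:loopatmost} is checking that $\vr{c}'$ cannot get stuck between the two decrements of a second-loop pass, and the bound $\ell\leq\frac{n+\alpha(\vr{c}')}{2}$ (which is $\lfloor (n+\alpha(\vr{c}'))/2\rfloor$ since $\ell\in\setN$) is precisely what rules that out; everything else is routine and parallels Lemmas~\ref{lem:loopatmost} and~\ref{lem:loopatmostinc}.
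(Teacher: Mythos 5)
Your proof is correct and follows essentially the same route as the paper's: unfold the two loops into iteration counts $n$ and $\ell$, track $\vr{c}$ and $\vr{c}'$ through each pass (using that $M$ touches neither), derive $n\leq\alpha(\vr{c})$ and $\ell\leq\frac{n+\alpha(\vr{c}')}{2}$ from nonnegativity, and handle the equality case by observing both inequalities in the chain must be tight. The extra check that $\vr{c}'$ survives the two consecutive decrements in each pass of the second loop is a harmless elaboration of what the paper leaves implicit.
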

\begin{proof}
  We denote by $N$ and $N_{n,\ell}$ the following test-free programs where $n,\ell\in\setN$:
  \progline{N}{
    \Loop
    \State \dec{\vr{c}};\inc{\vr{c}'}\label{loop:div1}
    \EndLoop
    \Loop
    \State \dec{\vr{c}'};\inc{\vr{c}};\dec{\vr{c}'};M\label{loop:div2}
    \EndLoop
  }
  
  \prog{N_{n,\ell}}{
    \State $(\dec{\vr{c}};\inc{\vr{c}'})^{(n)}$
    \State $(\dec{\vr{c}'};\inc{\vr{c}};\dec{\vr{c}'};M)^{(\ell)}$
  }
  
  Let $\alpha,\beta$ be two configurations such that $\alpha\xrightarrow{N}\beta$. We fix some execution witnessing that property. Let $n$ be the number of times line~\ref{loop:1} is executed and $\ell$ be the number of times line~\ref{loop:2} is executed. Since each execution of line~\ref{loop:div1} decrements $\vr{c}$, we deduce that after executing the first loop we get a configuration $\rho$ satisfying $\rho(\vr{c})=\alpha(\vr{c})-n$ and $\rho(\vr{c}')=\alpha(\vr{c}')+n$. In particular $n\leq \alpha(\vr{c})$. Symmetrically, since each execution of line~\ref{loop:div2} decrements $\vr{c}'$ two times, we deduce that $\beta(\vr{c}')=\rho(\vr{c}')-2\ell=\alpha(\vr{c}')+n-2\ell$ and $\beta(\vr{c})=\rho(\vr{c})+\ell=\alpha(\vr{c})-n+\ell$. In particular $\ell\leq \frac{n+\alpha(\vr{c}')}{2}$. Observe that $\alpha\xrightarrow{N_{n,\ell}}\beta$. Moreover, we have $\ell\leq \frac{\alpha(\vr{C})}{2}$. Observe that if $\ell=\frac{\alpha(\vr{C})}{2}$, then from $\beta(\vr{c}')=\alpha(\vr{c}')+n-2\ell$ we get $\beta(\vr{c}')=n-\alpha(\vr{c})$. In particular $n\geq \alpha(\vr{c})$ and with $n\leq \alpha(\vr{c})$ we get $n=\alpha(\vr{c})$. We deduce that $\beta(\vr{c})=\frac{\alpha(\vr{C})}{2}$ and $\beta(\vr{c}')=0$.

  Conversely, let $\alpha,\beta$ be two configurations such that $\alpha\xrightarrow{N_{n,\ell}}\beta$ for two natural numbers $n,\ell\in\setN$ such that $n\leq\alpha(\vr{c})$ and $\ell\leq \frac{n+\alpha(\vr{c}')}{2}$. Just observe that from the execution witnessing $\alpha\xrightarrow{N_{n,\ell}}\beta$, we deduce an execution witnessing $\alpha\xrightarrow{N}\beta$ by executing the first loop $n$ times and the second loop $\ell$ times.
\end{proof}

\begin{lemma}\label{lem:loopatmostdecinc}
  For every configuration $\alpha,\beta$, we have:
  $$\alpha\xrightarrow{\textbf{Loop at most $\vr{c}+\vr{c}'$ times $(\dec{\vr{c}'};\inc{\vr{c}};M)$}}\beta$$
  if, and only if, there exists $n,\ell\in\setN$ such that $n\leq \alpha(\vr{c})$, $\ell\leq \frac{n+\alpha(\vr{c}')}{2}$, and such that:
  $$\alpha\xrightarrow{(\dec{\vr{c}};\inc{\vr{c}'})^{(n)};((\dec{\vr{c}'};\inc{\vr{c}})^{(2)};M)^{(\ell)}}\beta$$
  In particular $\ell\leq \frac{\alpha(\vr{C})}{2}$. If this inequality is an equality then $\beta(\vr{c})=\alpha(\vr{C})$, and $\beta(\vr{c}')=0$.
\end{lemma}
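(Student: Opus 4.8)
The plan is to mirror, essentially verbatim, the proof of Lemma~\ref{lem:loopatmostdec}. First I would unfold the definition: substituting $\dec{\vr{c}'};\inc{\vr{c}};M$ for the body $M$ in \textbf{Loop at most $\vr{c}+\vr{c}'$ times $M$} turns the program into the series composition of a first loop with body $\dec{\vr{c}};\inc{\vr{c}'}$ followed by a second loop with body $(\dec{\vr{c}'};\inc{\vr{c}})^{(2)};M$. Accordingly I would let $N$ be this program and, for $n,\ell\in\setN$, let $N_{n,\ell}$ be $(\dec{\vr{c}};\inc{\vr{c}'})^{(n)};((\dec{\vr{c}'};\inc{\vr{c}})^{(2)};M)^{(\ell)}$, so that proving the lemma amounts to showing that $\alpha\xrightarrow{N}\beta$ holds if and only if $\alpha\xrightarrow{N_{n,\ell}}\beta$ for some $n\leq\alpha(\vr{c})$ and $\ell\leq\frac{n+\alpha(\vr{c}')}{2}$.

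For the forward direction I would fix an execution witnessing $\alpha\xrightarrow{N}\beta$ and let $n$ and $\ell$ be the numbers of iterations of the first and second loops. Since the first loop only decrements $\vr{c}$, the configuration $\rho$ reached after it satisfies $\rho(\vr{c})=\alpha(\vr{c})-n\geq 0$, hence $n\leq\alpha(\vr{c})$, and $\rho(\vr{c}')=\alpha(\vr{c}')+n$. Because $M$ uses neither $\vr{c}$ nor $\vr{c}'$, each iteration of the second loop has net effect $-2$ on $\vr{c}'$ and $+2$ on $\vr{c}$, so $\beta(\vr{c}')=\alpha(\vr{c}')+n-2\ell\geq 0$, hence $\ell\leq\frac{n+\alpha(\vr{c}')}{2}$, and $\beta(\vr{c})=\alpha(\vr{c})-n+2\ell$. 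This gives $\alpha\xrightarrow{N_{n,\ell}}\beta$ and $\ell\leq\frac{n+\alpha(\vr{c}')}{2}\leq\frac{\alpha(\vr{C})}{2}$. In the case of equality, $n+\alpha(\vr{c}')=2\ell=\alpha(\vr{C})$ forces $n\geq\alpha(\vr{c})$, which combined with $n\leq\alpha(\vr{c})$ yields $n=\alpha(\vr{c})$; then $\beta(\vr{c}')=0$ and $\beta(\vr{c})=\alpha(\vr{c})-n+2\ell=\alpha(\vr{C})$, as claimed.

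For the converse I would take an execution witnessing $\alpha\xrightarrow{N_{n,\ell}}\beta$ with $n\leq\alpha(\vr{c})$ and $\ell\leq\frac{n+\alpha(\vr{c}')}{2}$, and replay it in $N$ by running the first loop $n$ times and the second loop $\ell$ times (the sequence of commands produced is literally the same, so the subexecutions of $M$, and hence the values of all counters other than $\vr{c},\vr{c}'$, evolve identically). The only point requiring care — and the only mildly delicate step in the whole argument — is checking that within each iteration of the second loop the two consecutive commands $\dec{\vr{c}'}$ are both executable: at the start of the $j$th iteration $\vr{c}'$ has value $\rho(\vr{c}')-2(j-1)=\alpha(\vr{c}')+n-2(j-1)$, and the bound $j\leq\ell\leq\frac{n+\alpha(\vr{c}')}{2}$ ensures this is at least $2$, so the double decrement and then $M$ go through. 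Since the argument is a transcription of the proof of Lemma~\ref{lem:loopatmostdec} with the single decrement of $\vr{c}$ replaced by a double one (the balance on $\vr{c}'$ being unchanged), I do not expect any genuine obstacle.
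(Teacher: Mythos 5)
Your proof is correct and is exactly what the paper intends: the paper's own proof of this lemma is just the remark that it is ``very similar to the proof of Lemma~\ref{lem:loopatmostdec}'', and your argument is precisely that transcription, with the single decrement of $\vr{c}'$ and increment of $\vr{c}$ doubled and the equality case correctly yielding $\beta(\vr{c})=\alpha(\vr{c})-n+2\ell=\alpha(\vr{C})$. (The executability check in your converse direction is harmless but redundant, since the witnessing execution of $N_{n,\ell}$ already consists of the same command sequence.)
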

\begin{proof}
  The proof is very similar to the proof of Lemma~\ref{lem:loopatmostdec}.
\end{proof}

\section{Implementing $\reduce_d$}\label{sec:reduce}
In this section, we introduce a test-free program $\textbf{evalF}_d$ that implements the function $\reduce_d$. This program is using counters $\vr{x},\vr{x}',\vr{x}_{1},\ldots,\vr{x}_{d+1},\vr{y},\vr{b},\vr{b}',\vr{c}_0,\vr{c}_0',\vr{c}_1,\ldots,\vr{c}_d$. We say that a configuration $\rho$ is \emph{good} if it satisfies $\vr{x}'=\vr{y}=\vr{b}'=\vr{c}_0'=0$, $\vr{x}>0$, $\vr{b}=2^{\vr{c}_0}$, $\vr{x}_1=2^{\vr{c}_0}\vr{x}$ and $\vr{x}_{i}=2^{\vr{c}_{i-1}}\vr{x}_{i-1}$ for every $i\in \{2,\ldots,d+1\}$. We say that a good configuration $\rho$ encodes a pair $(v,n)\in\setN^d\times\setN$ if $\rho(\vr{c}_1,\ldots,\vr{c}_d)=v$ and $\rho(\vr{c}_0)=n$.

\begin{remark}
  The counters $\vr{x}$ and $\vr{x}'$ should have been denoted as $\vr{x}_0$ and $\vr{x}_0'$ to simplify a little bit the definition of good configurations. However, since those two counters appear many times in the sequel, we prefer the notation without indexes. Moreover, this notation match the one used for preamplifiers. In fact, in the next section we introduce an Ackermannian preamplifier for the triple of counters $(\vr{x},\vr{y},\vr{b})$ without any variable renaming.
\end{remark}

\medskip

We introduce the counter expressions $\vr{C}_0$, $\vr{X}$ and $\vr{B}$ defined as $\vr{c}_0+\vr{c}_0'$, $\vr{x}+\vr{x}'$, and $\vr{b}+\vr{b}'$ respectively.

\medskip

\newcommand{\ibad}[1]{\operatorname{index}(#1)}

Intuitively, when the computation of $\textbf{evalF}_d$ is correct from a good configuration that encodes a pair $(v,n)$ with $v\not=\zero_d$ then the computation terminates on a good configuration that encodes $\reduce_d(v,n)$. When the computation is incorrect, we obtain a so-called bad configuration. Moreover, from a bad configuration any computation of $\textbf{evalF}_d$ leads to a bad configuration. Formally, we say that a configuration is \emph{$i$-bad} for some $i\in\{2,\ldots,d+1\}$ if it satisfies $\vr{x}+\vr{x}'>0$, $\vr{x}_{i}+\vr{y}>2^{\vr{c}_{i-1}}(\vr{x}_{i-1}+\vr{y})$, and $\vr{x}_{j}+\vr{y}=2^{\vr{c}_{j-1}}(\vr{x}_{j-1}+\vr{y})$ for every $j\in \{i+1,\ldots,d+1\}$. A configuration is \emph{$1$-bad} if is satisfies $\vr{x}'=\vr{y}=0$, $\vr{x}>0$, $\vr{b}+\vr{b}'<2^{\vr{c}_0+\vr{c}_0'}$, $\vr{x}_1=2^{\vr{c}_0+\vr{c}_0'}\vr{x}$, and $\vr{x}_{j}=2^{\vr{c}_{j-1}}\vr{x}_{j-1}$ for every $j\in \{2,\ldots,d+1\}$. We say that a configuration $\alpha$ is \emph{bad} if it is $i$-bad for some $i$. A configuration that is \emph{good} or \emph{bad} is said to be \emph{conform}. We associate to any conform configuration $\alpha$ its index of badness $\ibad{\alpha}$ defined by $\ibad{\alpha}=i$ if $\alpha$ is $i$-bad, and by $\ibad{\alpha}=0$ if $\alpha$ is good. Intuitively, from any conform configuration $\alpha$, the computation of $\textbf{evalF}_d$ can only produce conform configurations $\beta$ such that $\ibad{\beta}\geq \ibad{\alpha}$. In particular, if $\beta$ is good, then $\alpha$ is good as well.

\medskip

The program $\textbf{evalF}_{d}$ is defined as a non-deterministic choice of programs $\textbf{evalF}_{d,p}$ with $p\in\{1,\ldots,d\}$. Intuitively from a good configuration $\alpha$ that encodes a pair $(v,n)$ with $v\not=\zero_d$, the computation of $\textbf{evalF}_{d,p}$ can lead to a good configuration only if $p$ is the minimal index such that $v[p]>0$. In that case the good configuration produced at the end of the computation encodes $\reduce_d(v,n)$. 

\medskip

We introduce the following test-free program for $p\in\{1,\ldots,d\}$. The first loop at line~\ref{loop:first} intuitively transfer $\min{\vr{x}_1,\ldots,\vr{x}_d}$ into $\vr{y}$. Thanks to Lemma~\ref{lem:loopatmost}, the second loop at line~\ref{loop:k} transfer back from $\vr{y}$ the value of $\vr{X}$. Notice that $\vr{x}_p$ is incremented twice during each iteration of the loop. The loop at line~\ref{loop:k} is interpreted thanks to Lemma~\ref{lem:loopatmost} if $p>1$ and Lemma~\ref{lem:loopatmostinc} if $p=1$ as a loop that is iterated $\vr{C}_0$ times, the loop at line~\ref{loop:x} is interpreted thanks to Lemma~\ref{lem:loopatmostdec} as a loop that is iterated $\vr{X}$ times and that divides $\vr{X}$ by $2$, and finally the loop at line~\ref{loop:v} is interpreted thanks to Lemma~\ref{lem:loopatmost} as a loop that is iterated $\vr{B}$ times.
\progline{\textbf{evalF}_{d,p}}{
  \State \dec{\vr{c}_p};\inc{\vr{c}_{p-1}}
  \Loop\label{loop:first}
  \State \dec{\vr{x}_1,\ldots,\vr{x}_{d+1}};\inc{\vr{y}}\label{l:Er}
  \EndLoop
  \Loopatmost{\vr{x}}{\vr{x}'}\label{loop:s}
    \State \dec{\vr{y}};\inc{\vr{x}_p};\inc{\vr{x}_1,\ldots,\vr{x}_{d+1}}\label{l:Es}
  \EndLoop
  \Loopatmost{\vr{c}_0}{\vr{c}_0'}\label{loop:k}
    \State \inc{\vr{c}_{p-1}}\label{l:Ek}
    \Loopatmost{\vr{x}}{\vr{x}'}\label{loop:x}
      \State $\dec{\vr{x}'}$\label{l:Em}
      \Loopatmost{\vr{b}}{\vr{b}'}\label{loop:v}
        \State \dec{\vr{y}};$\inc{\vr{x}_p}$;\inc{\vr{x}_p,\ldots,\vr{x}_{d+1}}\label{l:En}
      \EndLoop
    \EndLoop
  \EndLoop
  \State $\textbf{updateB}_{d,p}$  
}

Where $\textbf{updateB}_{d,p}$ is the empty program if $p>1$ and the following one if $p=1$. Notice that the loop at line~\ref{loop:upc} can be interpreted thanks to Lemma~\ref{lem:loopatmostdecinc} as a loop that is iterated $\frac{\vr{C}_0}{2}$ times, and the loop at line~\ref{loop:upb} is interpreted thanks to Lemma~\ref{lem:loopatmostinc} as a loop that multiply by $2$ the value of $\vr{B}$.
\progline{\textbf{updateB}_{d,1}}{
  \State \inc{\vr{c}_0}\label{l:upc}
  \Loopatmost{\vr{c}_0}{\vr{c}_0'}\label{loop:upc}
  \State \dec{\vr{c}_0'};\inc{\vr{c}_0}\label{l:ups}
  \Loopatmost{\vr{b}}{\vr{b'}}\label{loop:upb}
  \State \inc{\vr{b}}\label{l:upk}
  \EndLoop
  \EndLoop
  \State \dec{\vr{c}_0}\label{l:downc}
}

\medskip

 We first provide two lemmas describing the behaviour of $\textbf{updateB}_{d,1}$.
\begin{lemma}\label{lem:updateF}
  Assume that $\alpha\xrightarrow{\textbf{updateB}_{d,1}}\beta$ for any two configurations $\alpha,\beta$. Then $\beta(\vr{C}_0)=\alpha(\vr{C}_0)$ and we have:
  \begin{equation}\label{equ:update}
    \beta(\vr{B})\leq \alpha(2^{\frac{\vr{C}_0+1}{2}}\vr{B})
  \end{equation}
  If the previous inequality is an equality and $\alpha(\vr{B})>0$ then $\beta$ satisfies $\vr{b}'=0$ and $\vr{c}_0'=0$.
\end{lemma}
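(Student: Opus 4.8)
The plan is to unfold $\textbf{updateB}_{d,1}$ construct by construct, replacing each \textbf{Loop at most} by its explicit iterated form via the lemmas of Section~\ref{sec:loop}, and then to follow the two quantities $\vr{C}_0$ and $\vr{B}$ along the resulting execution. Write $\alpha_1$ for the configuration obtained from $\alpha$ after the leading command $\inc{\vr{c}_0}$, so that $\alpha_1(\vr{C}_0)=\alpha(\vr{C}_0)+1$ and $\alpha_1(\vr{B})=\alpha(\vr{B})$, and $\beta_1$ for the configuration reached just before the closing $\dec{\vr{c}_0}$. To handle $\vr{C}_0$, I would note that the outer loop at line~\ref{loop:upc} has body $\dec{\vr{c}_0'};\inc{\vr{c}_0};M$, where $M$ is the inner loop at line~\ref{loop:upb} (which touches only $\vr{b},\vr{b}'$), so Lemma~\ref{lem:loopatmostdecinc} applies and yields $n,\ell\in\setN$ with $n\le\alpha_1(\vr{c}_0)$, $\ell\le\frac{n+\alpha_1(\vr{c}_0')}{2}$ and $\alpha_1\xrightarrow{(\dec{\vr{c}_0};\inc{\vr{c}_0'})^{(n)};((\dec{\vr{c}_0'};\inc{\vr{c}_0})^{(2)};M)^{(\ell)}}\beta_1$; in particular $\ell\le\frac{\alpha_1(\vr{C}_0)}{2}=\frac{\alpha(\vr{C}_0)+1}{2}$. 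Every command occurring here besides $M$ preserves $\vr{c}_0+\vr{c}_0'$, and $M$ does not touch $\vr{c}_0,\vr{c}_0'$, so $\vr{C}_0$ keeps the value $\alpha(\vr{C}_0)+1$ throughout the outer loop, and the closing $\dec{\vr{c}_0}$ brings it back, giving $\beta(\vr{C}_0)=\alpha(\vr{C}_0)$.

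To bound $\vr{B}$, I would let $B_0,\dots,B_\ell$ be the values of $\vr{B}$ just before the outer loop and after each of its $\ell$ iterations. Since $\inc{\vr{c}_0}$, the $n$ preliminary $\dec{\vr{c}_0};\inc{\vr{c}_0'}$ steps, the per-iteration prefixes $(\dec{\vr{c}_0'};\inc{\vr{c}_0})^{(2)}$, and the closing $\dec{\vr{c}_0}$ all leave $\vr{b},\vr{b}'$ untouched, we get $B_0=\alpha(\vr{B})$, $\beta(\vr{B})=B_\ell$, and $B_j$ is exactly the value of $\vr{B}$ produced by the $j$-th execution of the inner loop started from $B_{j-1}$. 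That inner loop is the instance of Lemma~\ref{lem:loopatmostinc} with counters $\vr{b},\vr{b}'$ and empty $M$, so $B_j\le 2B_{j-1}$; iterating and using $\ell\le\frac{\alpha(\vr{C}_0)+1}{2}$ gives $\beta(\vr{B})=B_\ell\le 2^\ell B_0\le 2^{\frac{\alpha(\vr{C}_0)+1}{2}}\alpha(\vr{B})=\alpha(2^{\frac{\vr{C}_0+1}{2}}\vr{B})$, which is~(\ref{equ:update}).

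For the equality case, one assumes~(\ref{equ:update}) is tight and $\alpha(\vr{B})>0$, i.e.\ $B_0>0$. Then $B_\ell=2^\ell B_0\le 2^{\frac{\alpha(\vr{C}_0)+1}{2}}B_0=B_\ell$ forces $2^\ell=2^{\frac{\alpha(\vr{C}_0)+1}{2}}$, hence $\ell=\frac{\alpha(\vr{C}_0)+1}{2}=\frac{\alpha_1(\vr{C}_0)}{2}$ — an integer that is at least $\frac12$ and therefore at least $1$ — and it also forces $B_j=2B_{j-1}$ for every $j$ (a strict inequality at any step would push $B_\ell$ strictly below $2^\ell B_0$). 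Since $\ell=\frac{\alpha_1(\vr{C}_0)}{2}$, the bound of Lemma~\ref{lem:loopatmostdecinc} is attained, so $\beta_1(\vr{c}_0')=0$ and thus $\beta(\vr{c}_0')=0$. Finally, the $\ell$-th inner loop runs from a configuration whose $\vr{B}$-value is $B_{\ell-1}=2^{\ell-1}B_0>0$ to $\beta_1$, with $\beta_1(\vr{B})=B_\ell=2B_{\ell-1}$, so the equality case of Lemma~\ref{lem:loopatmostinc} yields $\beta_1(\vr{b}')=0$; the closing $\dec{\vr{c}_0}$ does not touch $\vr{b}'$, so $\beta(\vr{b}')=0$.

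The main obstacle is this last paragraph: one must propagate tightness simultaneously through the geometric chain $B_j\le 2B_{j-1}$ and through the two loop-count bounds, and verify that $\ell\ge1$ so that there genuinely is a final inner loop executed on a positive value of $\vr{B}$. Everything else is routine bookkeeping with the lemmas of Section~\ref{sec:loop}.
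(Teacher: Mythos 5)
Your proof is correct and follows essentially the same route as the paper's: unfold the outer loop via Lemma~\ref{lem:loopatmostdecinc}, bound $\vr{B}$ through the geometric chain $B_j\leq 2B_{j-1}$ supplied by Lemma~\ref{lem:loopatmostinc}, and in the equality case propagate tightness simultaneously through that chain and through the loop-count bound to invoke the equality cases of both lemmas. Your additional checks (preservation of $\vr{C}_0$, and $\ell\geq 1$ so that a final inner loop actually runs) merely make explicit details the paper leaves implicit.
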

\begin{proof}
  Let $s$ be the number of times line~\ref{l:ups} is executed. Lemma~\ref{lem:loopatmostdecinc} shows that $s\leq \frac{1+\alpha(\vr{C}_0)}{2}$ (the $1$ in the expression comes form the increment command at line~\ref{l:downc}). Let $\rho_{j-1}$ for $j\in\{1,\ldots,s\}$ be the configuration just before executing that line the $j$th time. We also denote by $\rho_s$ the configuration just before the execution of line~\ref{l:downc}. Lemma~\ref{lem:loopatmostinc} shows that $\rho_j(\vr{B})\leq \rho_{j-1}(\vr{B})$ for every $j\in\{1,\ldots,s\}$. We deduce that the inequality~(\ref{equ:update}) holds. If this inequality is an equality and $\alpha(\vr{B})>0$, then $s=\frac{1+\alpha(\vr{C}_0)}{2}$. Lemma~\ref{lem:loopatmostdecinc} shows that $\beta(\vr{c}_0')=0$. Notice that since the inequality~(\ref{equ:update}) is an equality, then the inequality $\rho_j(\vr{B})\leq 2\rho_{j-1}(\vr{B})$ is also an equality for every $j\in\{1,\ldots,k\}$. In particular $\rho_k(\vr{b}')=0$ from Lemma~\ref{lem:loopatmostinc}. It follows that $\beta(\vr{b}')=0$.
\end{proof}

\begin{lemma}\label{lem:updateE}
  Let $\alpha$ be any configuration satisfying $\alpha(\vr{C}_0)$ is odd. Then $\alpha\xrightarrow{\textbf{updateB}_{d,1}}\beta$ where $\beta$ is the configuration defined for every counter $\vr{c}$ as follows:
  $$\beta(\vr{c})=
  \begin{cases}
    \alpha(\vr{C}_0) & \text{ if }\vr{c}=\vr{c}_0\\
    0 & \text{ if }\vr{c}=\vr{c}_0'\\
    \alpha(2^{\frac{\vr{C}_0+1}{2}}\vr{B}) & \text{ if }\vr{c}=\vr{b}\\
    0 & \text{ if }\vr{c}=\vr{b}'\\
    \alpha(\vr{c}) & \text{ otherwise}
  \end{cases}$$
\end{lemma}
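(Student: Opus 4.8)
The plan is to prove this ``existence'' statement by exhibiting one concrete execution of $\textbf{updateB}_{d,1}$ from $\alpha$ that lands exactly on the described $\beta$. Write $\kappa=\alpha(\vr{C}_0)$ and $K=\alpha(\vr{B})$. First I would perform the increment at line~\ref{l:upc}, reaching a configuration $\alpha_0$ that agrees with $\alpha$ except $\alpha_0(\vr{c}_0)=\alpha(\vr{c}_0)+1$; the point — and the only place the hypothesis is used — is that $\alpha_0(\vr{C}_0)=\kappa+1$ is now \emph{even}, so the outer loop at line~\ref{loop:upc} has room to be run its maximal number $\tfrac{\kappa+1}{2}$ of times, and the closing decrement at line~\ref{l:downc} will later bring $\vr{c}_0$ back down to $\kappa$.

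Next I would analyse the outer loop at line~\ref{loop:upc} with Lemma~\ref{lem:loopatmostdecinc}, instantiated with $\vr{c}=\vr{c}_0$, $\vr{c}'=\vr{c}_0'$, and with $M$ the inner loop of line~\ref{loop:upb} (which indeed does not mention $\vr{c}_0,\vr{c}_0'$). Using the ``if'' direction of that lemma with $n=\alpha_0(\vr{c}_0)$ and $\ell=\tfrac{\kappa+1}{2}$ — so that $n\le\alpha_0(\vr{c}_0)$ and $\ell=\tfrac{n+\alpha_0(\vr{c}_0')}{2}$ — it suffices to realise the unfolded program $(\dec{\vr{c}_0};\inc{\vr{c}_0'})^{(n)};\big((\dec{\vr{c}_0'};\inc{\vr{c}_0})^{(2)};M\big)^{(\ell)}$: the prefix empties $\vr{c}_0$ into $\vr{c}_0'$ (leaving $\vr{b},\vr{b}'$ untouched), and then each of the $\ell$ blocks moves two units back into $\vr{c}_0$ and runs one copy of the inner loop $M$. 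For each copy of $M$ I would invoke the ``if'' direction of Lemma~\ref{lem:loopatmostinc} (with $\vr{c}=\vr{b}$, $\vr{c}'=\vr{b}'$, and $M$ the empty program), choosing its parameters so that the extreme case is attained, i.e. the inner loop replaces the current value $V$ of $\vr{b}+\vr{b}'$ by $2V$, sets $\vr{b}'$ to $0$, and changes no other counter. Iterating this $\ell$ times starting from value $K$ gives $\vr{b}=2^{(\kappa+1)/2}K$ and $\vr{b}'=0$ after the outer loop, together with $\vr{c}_0=\kappa+1$, $\vr{c}_0'=0$, and every remaining counter as in $\alpha$; the decrement at line~\ref{l:downc} then yields $\vr{c}_0=\kappa$, which is precisely the $\beta$ of the statement once one notes that $\textbf{updateB}_{d,1}$ touches only $\vr{c}_0,\vr{c}_0',\vr{b},\vr{b}'$ and that $2^{(\kappa+1)/2}K=\alpha\big(2^{(\vr{C}_0+1)/2}\vr{B}\big)$.

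I do not anticipate a genuine obstacle: this is the constructive half of the reduction lemmas, so the whole argument is a matter of bookkeeping. The only care needed is to check the side conditions each time the lemmas are applied — that $n\le\alpha_0(\vr{c}_0)$ and $\ell=\tfrac{n+\alpha_0(\vr{c}_0')}{2}$ in Lemma~\ref{lem:loopatmostdecinc}, that $\vr{c}_0'$ never runs dry during the $\ell$ blocks (before the $j$-th block it still holds $\kappa+1-2(j-1)\ge 2$), and that the running value $2^{j-1}K$ of $\vr{b}+\vr{b}'$ is fed correctly into Lemma~\ref{lem:loopatmostinc} at each step — and to observe that making $\kappa+1$ even by the opening increment is exactly what allows the maximal-iteration branch of those lemmas to be taken. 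This mirrors, in the opposite direction, the accounting already carried out in the proof of Lemma~\ref{lem:updateF}.
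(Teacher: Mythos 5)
Your proof is correct and follows exactly the route the paper intends: the paper's own proof is the one-line remark that one ``just applies'' Lemma~\ref{lem:loopatmostdecinc} to the outer loop and Lemma~\ref{lem:loopatmostinc} to the inner loop in their maximal-iteration branches, which is precisely the execution you construct, with the bookkeeping (parity of $\vr{C}_0$ after the opening increment, the iteration counts $n$ and $\ell=\tfrac{\kappa+1}{2}$, and the doubling of $\vr{B}$ at each pass) spelled out. No gap; your write-up is simply a fully detailed version of the paper's argument.
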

\begin{proof}
  Just observe that we can apply Lemma~\ref{lem:loopatmostdecinc} and Lemma~\ref{lem:loopatmostinc}.
\end{proof}

\begin{lemma}\label{lem:evalF}
  Let $\alpha$ be a conform configuration and $\beta$ be any configuration such that $\alpha\xrightarrow{\textbf{evalF}_{d,p}}\beta$ with $p\in\{1,\ldots,d\}$. Then $\beta$ is conform and $\ibad{\beta}\geq \ibad{\alpha}$. In particular, if $\beta$ is good then $\alpha$ is good. In that case, $\alpha$ satisfies $\vr{b}$ divides $\vr{x}$, denoting by $(v,n)$ the pair encoded by $\alpha$ then $v\not=\zero_d$, $p$ is the minimal index such that $v[p]>0$, $\reduce_d(v,n)$ is encoded by $\beta$, and $\beta(\vr{x}_{d+1})=\alpha(\vr{x}_{d+1})$.
\end{lemma}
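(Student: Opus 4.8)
The plan is to analyze the execution of $\textbf{evalF}_{d,p}$ line by line, using the interpretation of each loop via the ``loop at most'' lemmas (Lemmas~\ref{lem:loopatmost}, \ref{lem:loopatmostinc}, \ref{lem:loopatmostdec}, \ref{lem:loopatmostdecinc}) to track how the relevant combinations of counters evolve, and then to argue separately that conformity is preserved (with non-decreasing badness index) and that, when the output is good, the computation must have been faithful. First I would introduce the key invariant quantities $\vr{x}_i+\vr{y}$ for $i\in\{1,\ldots,d+1\}$, noting that line~\ref{l:Er} decrements all $\vr{x}_1,\ldots,\vr{x}_{d+1}$ while incrementing $\vr{y}$ (so each $\vr{x}_i+\vr{y}$ is unchanged as long as no $\vr{x}_i$ hits $0$), and that line~\ref{l:Es} does the reverse plus an extra increment of $\vr{x}_p$. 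The total transfer into $\vr{y}$ in the first loop is at most $\min(\vr{x}_1,\ldots,\vr{x}_{d+1})$, and Lemma~\ref{lem:loopatmost} applied at line~\ref{loop:s} forces the second loop to pull exactly $\vr{X}=\vr{x}+\vr{x}'$ units back out of $\vr{y}$ provided $\vr{x}'$ ends at $0$, which the final test commands (via the preamplifier discipline) will ultimately require. I would set up the bookkeeping so that, in a faithful run, after lines~\ref{loop:first}--\ref{loop:s} we have moved $\vr{x}$ into $\vr{y}$ entirely (using $\vr{x}_1=2^{\vr{c}_0}\vr{x}$ etc., and goodness of $\alpha$) and then restored it, with $\vr{x}_p$ having been incremented an extra $\vr{x}$ times.

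Next I would handle the nested loop at lines~\ref{loop:k}--\ref{l:En}. Using Lemma~\ref{lem:loopatmost} (for $p>1$) or Lemma~\ref{lem:loopatmostinc} (for $p=1$), the outer loop at line~\ref{loop:k} runs at most $\vr{C}_0=\vr{c}_0+\vr{c}_0'$ times; in a faithful run it runs exactly $\alpha(\vr{c}_0)$ times after the initial decrement/increment at the top of $\textbf{evalF}_{d,p}$ (which realizes $v\mapsto v-\unit_{d,p}+\unit_{d,p-1}$ in the $\vr{c}$-counters before the loop, then line~\ref{l:Ek} adds back copies of $\vr{c}_{p-1}$). Each iteration: line~\ref{loop:x} is a ``loop at most $\vr{X}$ while halving $\vr{X}$'' by Lemma~\ref{lem:loopatmostdec}, and line~\ref{loop:v} is a ``loop at most $\vr{B}$'' by Lemma~\ref{lem:loopatmost}, with line~\ref{l:En} again moving content out of $\vr{y}$ into $\vr{x}_p,\ldots,\vr{x}_{d+1}$ (with $\vr{x}_p$ doubled). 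The effect of one outer iteration, when everything is saturated, is to multiply $\vr{x}_p$'s contribution by $2^{\vr{c}_{p'}}$ for the appropriate level, i.e. exactly the doubling structure encoded in goodness; iterating $\vr{c}_0$ times yields the $\vr{x}_1=2^{\vr{c}_0}\vr{x}$ relation being rebuilt, and for $p=1$ the extra increment of $\vr{x}_1$ per inner step produces the $2n+1$ arithmetic via the extra $+1$ counted in $F_1(n)=2n+1$. Then $\textbf{updateB}_{d,p}$ (empty for $p>1$, and for $p=1$ analyzed by Lemmas~\ref{lem:updateF}--\ref{lem:updateE}) repairs $\vr{b}=2^{\vr{c}_0}$ since $\vr{c}_0$ changed from $n$ to $2n+1$, using that $2^{(\vr{C}_0+1)/2}\cdot 2^n = 2^{2n+1}$ when $\vr{C}_0=2n+1$ is odd; the parity check is exactly what Lemma~\ref{lem:updateF} needs. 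Throughout, $\vr{x}_{d+1}$ is only decremented (line~\ref{l:Er}) and re-incremented (lines~\ref{l:Es}, \ref{l:En}) in a balanced way, giving $\beta(\vr{x}_{d+1})=\alpha(\vr{x}_{d+1})$, and divisibility ``$\vr{b}$ divides $\vr{x}$'' follows from $\vr{x}_1=2^{\vr{c}_0}\vr{x}$ together with $\vr{b}=2^{\vr{c}_0}$ being forced.

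For the conformity-preservation half, I would proceed by case analysis on $\ibad{\alpha}$ and on which loop first ``fails to saturate.'' If $\alpha$ is good and some loop is underrun, I would identify the smallest level $i$ at which an equality $\vr{x}_i+\vr{y}=2^{\vr{c}_{i-1}}(\vr{x}_{i-1}+\vr{y})$ becomes a strict $>$, and check that the definition of $i$-bad (with the higher-index equalities still intact and $\vr{x}+\vr{x}'>0$) is met by $\beta$; the point is that a deficit introduced at level $i$ propagates upward but the levels above $i$ remain exact because the loops at those levels still move everything they can. If $\alpha$ is already $i$-bad, the same monotonicity shows $\beta$ is $j$-bad for some $j\geq i$ (a deficit can only be created, never healed, since every loop can only underrun, never overrun — this is the ``$\ell\leq\alpha(\vr{C})$'' direction of the lemmas). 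The $1$-bad case is handled via $\textbf{updateB}_{d,1}$ and Lemma~\ref{lem:updateF}: a strict inequality $\vr{b}+\vr{b}'<2^{\vr{c}_0+\vr{c}_0'}$ can only be maintained or worsened. Consequently if $\beta$ is good, no deficit ever occurred, so $\alpha$ was good and every loop ran maximally, which pins down all the arithmetic above. I expect the main obstacle to be the combinatorial bookkeeping in the doubly-nested loop (lines~\ref{loop:k}--\ref{l:En}): correctly formalizing ``one outer iteration multiplies the $\vr{x}_p$-register by $2^{\vr{c}_{p-1}}$ modulo the carefully-placed extra increments,'' and simultaneously verifying that an underrun at any of the three loop levels lands $\beta$ in exactly one of the $i$-bad classes with the higher equalities preserved — this requires choosing the invariants ($\vr{x}_i+\vr{y}$, $\vr{X}$, $\vr{B}$, $\vr{C}_0$) and the order of reasoning so that the ``faithful'' and ``bad'' analyses share as much as possible rather than being redone from scratch.
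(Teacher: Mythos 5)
Your overall strategy coincides with the paper's: interpret each loop through Lemmas~\ref{lem:loopatmost}--\ref{lem:loopatmostdecinc}, track the invariants $\vr{x}_i+\vr{y}$, $\vr{X}$, $\vr{B}$, $\vr{C}_0$, and argue that loop underruns can only create or worsen badness. The faithful-run bookkeeping you describe is essentially the content of Lemma~\ref{lem:evalE}, and your case split on $\ibad{\alpha}$ matches the paper's. However, the half you flag as ``the main obstacle'' is exactly where the proof lives, and your sketch does not supply the mechanism. The slogan ``a deficit can only be created, never healed'' is not enough: the program decrements $\vr{c}_p$ and gives $\vr{x}_p$ extra increments, so the quantity $\vr{x}_{p+1}+\vr{y}-2^{\vr{c}_p}(\vr{x}_p+\vr{y})$ genuinely changes, and one must \emph{prove} its new value is the old value plus non-negative terms. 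The paper does this by writing $\beta(\vr{x}_{p+1}+\vr{y}-2^{\vr{c}_p}(\vr{x}_p+\vr{y}))$ explicitly as $\alpha(\cdot)+2^{\alpha(\vr{c}_p)-1}(\alpha(\vr{x}_p)-r)+2^{\alpha(\vr{c}_p)-1}\beta(\vr{y})$, and then writing $0=\alpha(\vr{x}_1)-(s+n)$ as a sum of five non-negative terms, one of which is the telescoped remainder $2^{\alpha(\vr{C}_0)}h$ with $h=\sum_j 2^{j-k-1}h_j$ collecting the rounding errors of the halving loop at line~\ref{loop:x}. Deducing that all terms vanish when $\beta$ is good is what forces $r=\alpha(\vr{x}_p)$, $s=\alpha(\vr{X})$, $n=\alpha(\vr{B})m$, $k=\alpha(\vr{C}_0)$ and $h=0$ simultaneously; without exhibiting these identities the ``no loop underran'' conclusion is not justified.

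Two concrete errors to fix. First, your justification of ``$\vr{b}$ divides $\vr{x}$'' is a non-sequitur: $\vr{x}_1=2^{\vr{c}_0}\vr{x}$ together with $\vr{b}=2^{\vr{c}_0}$ says nothing about $2^{\vr{c}_0}$ dividing $\vr{x}$. Divisibility comes from $h=0$: each of the $k=\alpha(\vr{c}_0)$ passes of the halving loop must be exact for $\beta$ to be good, so $2^{\alpha(\vr{c}_0)}=\alpha(\vr{b})$ must divide $\alpha(\vr{x})$. Second, ``identify the smallest level $i$ at which an equality becomes strict'' misdescribes where new badness can arise: the program only touches $\vr{x}_p$ asymmetrically (and $\vr{b}$ via $\textbf{updateB}_{d,1}$), so a fresh defect can only appear at index $p+1$ or index $1$; for $i\geq p+2$ the quantities $\vr{x}_i+\vr{y}$ and $\vr{c}_{i-1}$ are literally invariant, which is how the paper disposes of those cases, and for $i\in\{2,\ldots,p\}$ an $i$-bad $\alpha$ is ruled out by the contradiction $r\leq\min_j\alpha(\vr{x}_j)<\alpha(\vr{x}_p)=r$ rather than by propagating the defect.
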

\begin{proof}  
  Let $r,s,k,m$ and $n$ be the number of times lines~\ref{l:Er}, \ref{l:Es}, \ref{l:Ek}, \ref{l:Em}, and \ref{l:En} are executed respectively. Observe that $r\leq \min\{\alpha(\vr{x}_1),\ldots,\alpha(\vr{x}_{d+1})\}$ since every time line~\ref{l:Er} is executed, the counters $\vr{x}_1,\ldots,\vr{x}_{d+1}$ are decremented, $s\leq \alpha(\vr{X})$ thanks to Lemma~\ref{lem:loopatmost}, $k\leq \alpha(\vr{C}_0)$, thanks to Lemma~\ref{lem:loopatmost} if $p>1$ and Lemma~\ref{lem:loopatmostinc} if $p=1$ (in fact in that case $\vr{c}_{p-1}=\vr{c}_0$),  and $n\leq \alpha(\vr{B})m$ thanks to Lemma~\ref{lem:loopatmost}.

  Let us prove that $\beta(\vr{X})>0$. Observe that $\alpha(\vr{X})>0$ since $\alpha$ is conform. Moreover, line~\ref{l:Em} is the unique line that may prevent $\beta(\vr{X})>0$ holding. However, each time this line is executed, by definition of \textbf{loop at most}, notice that $\vr{x}$ is incremented just before. It follows that $\beta(\vr{X})>0$.
  
  For each $j\in\{1,\ldots,k\}$, let $m_j$ be the number of times line~\ref{l:Em} is executed during the $j$th execution of the loop at line~\ref{loop:k}. We denote by $\alpha_{j-1}$ the configuration just before the $j$th execution of line~\ref{l:Em}, and by $\alpha_k$ the configuration just before executing $\textbf{updateB}_{d,p}$. Notice that $\alpha_0(\vr{X})=\alpha(\vr{X})$, $2m_j\leq \alpha_{j-1}(\vr{X})$, and $\alpha_j(\vr{X})=\alpha_{j-1}(\vr{X})-m_j$ for every $j\in\{1,\ldots,k\}$. Let us denote by $h_j$ the natural number such that $\alpha_{j-1}(\vr{X})=2m_j+h_j$. By induction on $i\in\{0,\ldots,k\}$, we get the following equality:
  $$\sum_{j=1}^i m_j = \alpha(\vr{X})[1-2^{-i}]-\sum_{j=1}^i 2^{j-i-1}h_j$$
  In particular with $i=k$, since $m=\sum_{j=1}^k m_j$, we get the following equality by introducing the non negative rational number $h=\sum_{j=1}^k 2^{j-k-1}h_j$:
  \begin{equation}\label{eq:m}
    2^{\alpha(\vr{C}_0)}m = -2^{\alpha(\vr{C}_0)}h-\frac{1}{2^k}\alpha(\vr{X})(2^{\alpha(\vr{C}_0)}-2^k)+\alpha((2^{\vr{C}_0}-1)\vr{X})
  \end{equation}
  
  \medskip
  
  Assume first that $\alpha$ is $i$-bad for some $i\in\{p+2,\ldots,d+1\}$. In that case $\alpha$ satisfies $\vr{x}_{i}+\vr{y}>2^{\vr{c}_{i-1}}(\vr{x}_{i-1}+\vr{y})$, and $\vr{x}_{j}+\vr{y}=2^{\vr{c}_{j-1}}(\vr{x}_{j-1}+\vr{y})$ for every $j\in\{i+1,\ldots, d+1\}$. Notice that $\beta$ also satisfies the same constraints since $\vr{x}_j+\vr{y}$ for $j\in\{i,\ldots,d+1\}$, and $\vr{c}_j$ for $j\in\{i,\ldots,d\}$ are invariant. Hence $\beta$ is $i$-bad in that case.

  So we can assume that $\alpha$ is not $i$-bad for every $i\in\{p+2,\ldots,d+1\}$. Since $\alpha$ is conform, we deduce that $\alpha$ satisfies $\vr{x}_{p+1}+\vr{y}\geq 2^{\vr{c}_p}(\vr{x}_p+\vr{y})$, and $\vr{x}_{j}+\vr{y}=2^{\vr{c}_{j-1}}(\vr{x}_{j-1}+\vr{y})$ for every $j\in\{p+2,\ldots,d+1\}$. Since $\vr{x}_j+\vr{y}$ for $j\in\{p+1,\ldots,d+1\}$, and $\vr{c}_j$ for $j\in\{p+1,\ldots,d\}$ are invariant, we deduce that $\beta$ satisfies $\vr{x}_{j}+\vr{y}=2^{\vr{c}_{j-1}}(\vr{x}_{j-1}+\vr{y})$ for every $j\in\{p+2,\ldots,d+1\}$.

  Observe that we have the following equalities:
  \begin{align*}
    \beta(\vr{x}_{p+1}+\vr{y})&=\alpha(\vr{x}_{p+1}+\vr{y})\\
    \beta(\vr{c}_p)&=\alpha(\vr{c}_p)-1\\
    \beta(\vr{x}_p+\vr{y})&=\alpha(\vr{x}_p+\vr{y})+s+n\\
    \beta(\vr{y})&=\alpha(\vr{y})+r-(s+n)
  \end{align*}
  From those equalities, we derive:
  \begin{align*}
    \beta(\vr{x}_{p+1}+\vr{y}-2^{\vr{c}_p}(\vr{x}_p+\vr{y}))
    &=\alpha(\vr{x}_{p+1}+\vr{y}-2^{\vr{c}_p}(\vr{x}_p+\vr{y}))\\
    &+2^{\alpha(\vr{c}_p)-1}(\alpha(\vr{x}_p)-r)\\
    &+2^{\alpha(\vr{c}_p)-1}\beta(\vr{y})
  \end{align*}
  Notice that the right hand-side of the last equality is a sum of three natural numbers. It means that if one of those numbers is strictly positive, then $\beta$ is $(p+1)$-bad. So, we can assume that those three numbers are zero. It means that $\alpha$ and $\beta$ satisfies $\vr{x}_{p+1}+\vr{y}=2^{\vr{c}_p}(\vr{x}_p+\vr{y})$, $r=\alpha(\vr{x}_p)$, and $\beta(\vr{y})=0$. In particular $\alpha$ is not $(p+1)$-bad.
  
  Assume by contradiction that $\alpha$ is $i$-bad for some $i\in\{2,\ldots,p\}$. It means that $\alpha$ satisfies $\vr{x}_{i}+\vr{y}>2^{\vr{c}_{i-1}}(\vr{x}_{i-1}+\vr{y})$ and $\vr{x}_{j}+\vr{y}=2^{\vr{c}_{j-1}}(\vr{x}_{j-1}+\vr{y})$ for every $j\in\{i+1,\ldots,d+1\}$. In particular $\alpha$ satisfies $\vr{x}_{i}>\vr{x}_{i-1}$ and $\vr{x}_{j}\geq \vr{x}_{j-1}$ for every $j\in\{i+1,\ldots,d\}$. We deduce that $\alpha$ satisfies $\vr{x}_i<\vr{x}_p$. It follows that $r\leq \min\{\alpha(\vr{x}_1),\ldots,\alpha(\vr{x}_{d+1})\}<\alpha(\vr{x}_p)=r$ and we get a contradiction.

  \medskip

  Since $\alpha$ is not $i$-bad for every $i\in\{2,\ldots,d+1\}$, we deduce that $\alpha$ is $1$-bad or good. In both cases, $\alpha$ satisfies $\vr{x}'=\vr{y}=0$, $\vr{B}\leq 2^{\vr{C}_0}$, $\vr{x}_1=2^{\vr{C}_0}\vr{x}$, and $\vr{x}_i=2^{\vr{c}_{i-1}}\vr{x}_{i-1}$ for every $i\in\{2,\ldots,d+1\}$. In particular $\alpha(\vr{x}_{d+1})\geq \cdots\geq \alpha(\vr{x})>0$. Notice that from $\alpha(\vr{y})=0$ and $\beta(\vr{y})=0$, we deduce from $\beta(\vr{y})=\alpha(\vr{y})+r-(s+n)$ that $r=s+n$.

  Assume by contradiction that $\alpha(\vr{c}_{i-1})>0$ for some $i\in\{2,\ldots,p\}$. Since $\alpha$ satisfies $\vr{x}_{i}=2^{\vr{c}_{i-1}}\vr{x}_{i-1}$, $\vr{c}_{i-1}>0$, and $\vr{x}_{i-1}>0$, we deduce that $\alpha(\vr{x}_i)>\alpha(\vr{x}_{i-1})$. In particular, $\alpha(\vr{x}_p)\geq \alpha(\vr{x}_i)>\alpha(\vr{x}_{i-1})$. Like in the previous paragraph we get a contradiction with $r\leq \min\{\alpha(\vr{x}_1),\ldots,\alpha(\vr{x}_{d+1})\}<\alpha(\vr{x}_p)=r$. Therefore $\alpha(\vr{c}_{i-1})=0$ for every $i\in\{2,\ldots,p\}$. We deduce that $\alpha(\vr{x}_1)=\cdots=\alpha(\vr{x}_p)$. It follows from $r=\alpha(\vr{x}_p)$ and $r=s+n$ that $\alpha(\vr{x}_1)=s+n$.

  \medskip

  Observe that we have (by developing the equalities and replacing $2^{\alpha(\vr{C}_0)}m$ by using the equality~\ref{eq:m}):
  \begin{align*}
    n &= -(\alpha(\vr{B})m-n)-\alpha(2^{\vr{B_0}}-\vr{B})m -2^{\alpha(\vr{C}_0)}h-\frac{1}{2^k}\alpha(\vr{X})(2^{\alpha(\vr{C}_0)}-2^k)+\alpha((2^{\vr{C}_0}-1)\vr{X}))\\
    s&= -(\alpha(\vr{X})-s)+\alpha(\vr{X})
  \end{align*}
  It follows that we have (we use the fact that $\alpha(\vr{x}_1-2^{\vr{C}_0}\vr{X})=0$):
  \begin{align*}
    0=\alpha(\vr{x}_1)-(s+n)
    =&(\alpha(\vr{B})m-n)\\
    &+(2^{\alpha(\vr{C}_0)}-\alpha(\vr{B}))m\\
    &+(\alpha(\vr{X})-s)\\
    &+2^{\alpha(\vr{C}_0)}h\\
    &+\frac{1}{2^k}\alpha(\vr{X})(2^{\alpha(\vr{C}_0)}-2^k)
    \end{align*}
  Since each term of the right hand side of the previous equality are non negative, we deduce that all the terms are zero. It means that $n=\alpha(\vr{B})m$, $(2^{\alpha(\vr{C}_0)}-\alpha(\vr{B}))m=0$, $s=\alpha(\vr{X})$, $h=0$, $2^{\alpha(\vr{C}_0)}=2^{k}$ (for the last equality, we use $\alpha(\vr{X})>0$). Since $k\leq \alpha(\vr{C}_0)$ we derive from $2^{\alpha(\vr{C}_0)}=2^{k}$ that $k=\alpha(\vr{C}_0)$. By replacing $h$ by $0$, and $k$ by $\alpha(\vr{C}_0)$ in equation~\ref{eq:m}, we derive $m=\alpha((1-2^{-\vr{C}_0})\vr{X})$.
  
  Assume that $\alpha$ is $1$-bad. In that case $\alpha(\vr{B})<2^{\alpha(\vr{C}_0)}$. It follows from $(2^{\alpha(\vr{C}_0)}-\alpha(\vr{B}))m=0$ that $m=0$. From $m=\alpha((1-2^{-\vr{C}_0})\vr{X})$ and $\alpha(\vr{X})>0$ we get $\alpha(\vr{C}_0)=0$. So, from $\alpha(\vr{B})<2^{\alpha(\vr{C}_0)}$ we get $\alpha(\vr{B})=0$. Observe that in that case $\beta$ is $1$-bad (the case $p=1$ is obtained thanks to Lemma~\ref{lem:updateF}). So, we can now assume that $\alpha$ is not $1$-bad.

  Since $\alpha$ is conform and not bad, we deduce that $\alpha$ is good. Hence $\alpha$ satisfies $\vr{x}'=\vr{c}_0'=\vr{b}'=\vr{y}=0$ and $\vr{b}=2^{\vr{c}_0}$. It follows from $n=\alpha(\vr{B})m$ that $n=\alpha((2^{\vr{C}_0}-1)\vr{X})$. Let $\rho$ be the configuration we obtain just before executing $\textbf{updateB}_{d,p}$. Since $h=0$, then $h_k=0$. It follows that $\rho(\vr{x}')=0$. As $s=\alpha(\vr{X})$, Lemma~\ref{lem:loopatmost} shows that $\rho(\vr{x}')=0$. Moreover, as $n=\alpha(\vr{B})m$, Lemma~\ref{lem:loopatmost} shows that $\rho(\vr{b}')=0$. It follows that $\rho(\vr{x})=\alpha(\vr{x})-m=\alpha(2^{-\vr{c}_0}\vr{x})$. Notice that $\rho(\vr{x}_i)$ with $i\in\{1,\ldots,p-1\}$ is equal to $\alpha(\vr{x}_i)-r+s=s=\alpha(\vr{x})$. We also have $\rho(\vr{x}_p)=\alpha(\vr{x}_p)-r+2s+2n=2\alpha(\vr{x}_p)$. Notice that $\rho(\vr{x}_i)$ with $i\in\{p+1,\ldots,d+1\}$ is equal to $\alpha(\vr{x}_i)$ since $\vr{x}_i+\vr{y}$ is an invariant and $\alpha(\vr{y})=\rho(\vr{y})=0$.

  It follows that for every counter $\vr{c}$, we have:
  $$\rho(\vr{c}) =
  \begin{cases}
    0 & \text{ if }\vr{c}\in\{\vr{x}',\vr{c}_0',\vr{b}',\vr{y}\}\\
    \alpha(2^{-\vr{c}_0}\vr{x}) & \text{ if }\vr{c}=\vr{x}\\
    \alpha(\vr{x}) & \text{ if $\vr{c}=\vr{x}_i$ with $i\in\{1,\ldots,p-1\}$}\\
    2\alpha(\vr{x}_p) & \text{ if $\vr{c}=\vr{x}_p$}\\
    \alpha(\vr{c}_p)-1 & \text{ if $\vr{c}=\vr{c}_p$}\\
    \alpha(1+\vr{c}_{p-1}+\vr{c}_{0}) & \text{ if $\vr{c}=\vr{c}_{p-1}$}\\
    \alpha(\vr{c}) & \text{ otherwise}
  \end{cases}$$
  Notice that if $p>1$ then $\alpha(\vr{c}_{p-1})=0$ and $\beta=\rho$. In that case $\beta$ is good and satisfies the lemma.

  If $p=1$, notice that $\rho(\vr{c}_0)=1+2\alpha(\vr{c}_0)$. From Lemma~\ref{lem:updateF} we deduce that $\beta$ is either $1$-bad or good depending depending if the inequality of that lemma is strict or an equality. Notice that if the inequality is an equality, the configuration $\beta$ is good and it encodes
  $\reduce_d(v,n)$.
\end{proof}

\begin{lemma}\label{lem:evalE}
  Let $\alpha$ be a good configuration satisfying $\vr{b}$ divides $\vr{x}$ and that encodes a pair $(v,n)$ with $v\not=\zero_d$, and let $p$ is the minimal index such that $v[p]>0$. There exists a good configuration $\beta$ that encodes $\reduce_d(v,n)$ satisfying $\beta(\vr{x}_{d+1})=\alpha(\vr{x}_{d+1})$, and such that $\alpha\xrightarrow{\textbf{evalF}_{d,p}}\beta$.
\end{lemma}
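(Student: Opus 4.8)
The plan is to reduce the entire statement, via Lemma~\ref{lem:evalF}, to the single task of producing \emph{one} execution $\alpha\xrightarrow{\textbf{evalF}_{d,p}}\beta$ whose target $\beta$ is a good configuration. Indeed $\alpha$ is good, hence conform, so Lemma~\ref{lem:evalF} applies to any $\beta$ with $\alpha\xrightarrow{\textbf{evalF}_{d,p}}\beta$, and the moment that $\beta$ turns out to be good the lemma supplies the rest: $\beta$ encodes $\reduce_d(v,n)$ and $\beta(\vr{x}_{d+1})=\alpha(\vr{x}_{d+1})$. So everything comes down to building such an execution. Writing $x=\alpha(\vr{x})$ and $n=\alpha(\vr{c}_0)$, minimality of $p$ gives $\alpha(\vr{c}_1)=\dots=\alpha(\vr{c}_{p-1})=0$, so goodness of $\alpha$ forces $\alpha(\vr{x}_1)=\dots=\alpha(\vr{x}_p)=2^{n}x$ and $\alpha(\vr{x}_i)\geq 2^{n}x$ for $i>p$, while $\vr{b}\mid\vr{x}$ reads $2^{n}\mid x$, say $x=2^{n}q$ with $q\geq 1$.

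I would then run $\textbf{evalF}_{d,p}$ with each loop at its intended iteration count. The command $\dec{\vr{c}_p}$ is executable since $\alpha(\vr{c}_p)>0$. The plain loop at line~\ref{loop:first} is iterated exactly $2^{n}x$ times (this is the minimum of the $\vr{x}_i$, so it is feasible, and it leaves $\vr{y}=2^{n}x$). The loop at line~\ref{loop:s} is realised, through Lemma~\ref{lem:loopatmost}, as exactly $\vr{X}=x$ iterations of its body. The loop at line~\ref{loop:k} is realised, through Lemma~\ref{lem:loopatmost} when $p>1$ and Lemma~\ref{lem:loopatmostinc} when $p=1$ (its body then begins with $\inc{\vr{c}_0}$, since $\vr{c}_{p-1}=\vr{c}_0$), as exactly $n$ iterations — one short of the maximum when $p=1$, chosen so that the configuration entering $\textbf{updateB}_{d,1}$ has $\vr{C}_0=2n+1$. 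Inside each of those, the loop at line~\ref{loop:x} is realised, through Lemma~\ref{lem:loopatmostdec}, as exactly $\vr{X}/2$ iterations, which halves $\vr{X}$; this is legitimate because on entering the $j$-th of the $n$ iterations one has $\vr{X}=x/2^{j-1}$, still even since $j\leq n$ and $2^{n}\mid x$. Inside each of those, the loop at line~\ref{loop:v} is realised, through Lemma~\ref{lem:loopatmost}, as exactly $\vr{B}=2^{n}$ iterations. For $p=1$ one closes with Lemma~\ref{lem:updateE} applied to $\textbf{updateB}_{d,1}$, using that $\vr{C}_0=2n+1$ is odd as that lemma demands.

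The remaining work is to propagate the valuations along this run; the only non-routine point concerns $\vr{y}$. By a geometric sum, line~\ref{l:En} is executed $\sum_{j=1}^{n}(x/2^{j})\,2^{n}=2^{n}x-x$ times in total, which is exactly the value of $\vr{y}$ just after line~\ref{loop:s}; since $\vr{y}$ is only decremented thereafter, it drops monotonically from $2^{n}x-x$ to $0$, which at once certifies feasibility of every $\dec{\vr{y}}$ and gives $\beta(\vr{y})=0$. The same count yields $\beta(\vr{x}_p)=2^{n+1}x$ and $\beta(\vr{x}_i)=\alpha(\vr{x}_i)$ for $i>p$; the $n$ halvings give $\beta(\vr{x})=q>0$; the equality cases of the loop-at-most lemmas give $\beta(\vr{x}')=\beta(\vr{b}')=\beta(\vr{c}_0')=0$; and one reads off the remaining values, checking in particular $\beta(\vr{b})=2^{\beta(\vr{c}_0)}$ (for $p=1$ this last equality uses that $\textbf{updateB}_{d,1}$ multiplies the value $\vr{B}=2^{n}$ by $2^{n+1}$). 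A direct check of the chain $\vr{x}_1=2^{\vr{c}_0}\vr{x}$ and $\vr{x}_i=2^{\vr{c}_{i-1}}\vr{x}_{i-1}$ then shows that $\beta$ is good, and invoking Lemma~\ref{lem:evalF} finishes the proof.

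I expect the genuine difficulty to be precisely the nested-loop bookkeeping inside the loop at line~\ref{loop:k}: since $\vr{X}$ is halved across its $n$ iterations, the iteration counts of the two inner loops vary, so at each level one must invoke the right loop-at-most lemma — with the delicate $p=1$ versus $p>1$ split at line~\ref{loop:k}, and the divisibility $2^{n}\mid x$ kept in reserve so that $\vr{X}$ stays even all the way down — and then carry out the geometric-sum accounting that makes $\vr{y}$ land exactly on $0$. Everything else is a mechanical, if lengthy, propagation of equalities.
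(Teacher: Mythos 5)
Your proposal is correct and follows essentially the same route as the paper's proof: you pick the same iteration counts (the first loop run $2^{n}x=\alpha(\vr{x}_1)$ times, line~\ref{loop:s} run $\alpha(\vr{x})$ times, line~\ref{loop:k} run $\alpha(\vr{c}_0)$ times with the inner loops run maximally), invoke the same loop-at-most lemmas and Lemma~\ref{lem:updateE} for the $p=1$ case (correctly noting that the initial $\inc{\vr{c}_0}$ makes $n$ one short of the maximum there, so that $\vr{C}_0=2n+1$ is odd on entering $\textbf{updateB}_{d,1}$), and do the same geometric-sum accounting that lands $\vr{y}$ exactly on $0$. The only difference is organizational: you delegate the verification that $\beta$ encodes $\reduce_d(v,n)$ and that $\beta(\vr{x}_{d+1})=\alpha(\vr{x}_{d+1})$ to Lemma~\ref{lem:evalF}, whereas the paper reads these off the explicit formula for the final configuration — a legitimate shortcut, though it saves little since certifying goodness already requires computing essentially all the final counter values.
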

\begin{proof}
  We are going to prove that there exists an execution of $\textbf{evalF}_{d,p}$ from $\alpha$ such that lines~\ref{l:Er}, \ref{l:Es}, \ref{l:Ek} are executed $r,s,k$ times with:
  \begin{align*}
    r&=\alpha(\vr{x}_1)\\
    s&=\alpha(\vr{x})\\
    k&=\alpha(\vr{c}_0)
  \end{align*}
  
  We also introduce the configurations that will appear along the execution of loop at line~\ref{loop:k}. To do so, we introduce the configurations $\alpha_0,\ldots,\alpha_k$ defined as follows for every counter $\vr{c}$ and for every $j\in\{0,\ldots,k\}$:
  $$
  \alpha_j(\vr{c})=
  \alpha\left(\begin{cases}
    \frac{\vr{x}}{2^j} & \text{ if $\vr{c}=\vr{x}$}\\
    \vr{x} & \text{ if $\vr{c}=\vr{x}_i$ with $i\in\{1,\ldots,p-1\}$}\\
    2\vr{b}\vr{x}+2\vr{x}(1-2^{\vr{c}_0-j}) & \text{ if $\vr{c}=\vr{x}_p$}\\
    \vr{x}_i+\vr{x}(1-2^{\vr{c}_0-j}) & \text{ if $\vr{c}=\vr{x}_i$ with $i\in\{p+1,\ldots,d+1\}$}\\
    \vr{x}(2^{\vr{c}_0-j}-1) & \text{ if $\vr{c}=\vr{y}$}\\
    \vr{c}_p-1 & \text{ if $\vr{c}=\vr{c}_p$}\\
    \vr{c}_{p-1}+1+j & \text{ if $\vr{c}=\vr{c}_{p-1}$ and $p>1$}\\
    \vr{c}_0-j & \text{ if $\vr{c}=\vr{c}_0'$}\\
    j & \text{ if $\vr{c}=\vr{c}_0$ and $p>1$}\\
    1+2j & \text{ if $\vr{c}=\vr{c}_0$ and $p=1$}\\
    \vr{c} & \text{otherwise}
  \end{cases}
  \right)
  $$

  Since $\alpha$ is good, it follows that $\alpha(\vr{x}_i)\geq  r$ for every $i\in\{1,\ldots,d+1\}$, and $r\geq \alpha(\vr{x})$. In particular line~\ref{l:Er} and~\ref{l:Es} can be effectively executed $r$ and $s$ times respectively. Notice that $\alpha_0$ is the configuration we obtain after executing those two loops and the hidden first loop of the $\textbf{loop at most}$ at line~\ref{loop:k} exactly $\alpha(\vr{x})$ times.

  We introduce the test-free program $N$ corresponding to the subprogram between line~\ref{l:Ek} and line~\ref{l:En} (including those those lines) and prefixed by $\dec{\vr{c}_0'};\inc{\vr{c}_0}$. Let us prove by induction on $j\in\{1,\ldots,k\}$ that $\alpha_{j-1}\xrightarrow{N}\alpha_j$. To do so, we are going to prove that there exists an execution of $N$ such that lines~\ref{l:Em}, \ref{l:En} are executed $m_j,n_j$ times with:
  \begin{align*}
    m_j &=\alpha_{j-1}(\frac{\vr{x}}{2})\\
    n_j&=\alpha_{j-1}(m_j\vr{b})
  \end{align*}
  Notice that those three numbers are natural numbers since $2^k$ divides $\alpha(\vr{x})$. This execution is obtained thanks to Lemma~\ref{lem:loopatmost} applied on line~\ref{l:En} in order to execute maximally the corresponding loop, and by applying Lemma~\ref{lem:loopatmostdec} applied on line~\ref{lem:loopatmostdec}. We have proved that $\alpha_{j-1}\xrightarrow{M}\alpha_j$. In particular, by executing $k$ times loop~\ref{loop:k}, we get the configuration $\alpha_k$. Notice that this configuration satisfies for every counter $\vr{c}$:
  $$
  \alpha_k(\vr{c})=
  \alpha\left(\begin{cases}
    \frac{\vr{x}}{\vr{b}} & \text{ if $\vr{c}=\vr{x}$}\\
    \vr{x} & \text{ if $\vr{c}=\vr{x}_i$ with $i\in\{1,\ldots,p-1\}$}\\
    2\vr{b}\vr{x} & \text{ if $\vr{c}=\vr{x}_p$}\\
    \vr{x}_i & \text{ if $\vr{c}=\vr{x}_i$ with $i\in\{p+1,\ldots,d+1\}$}\\
    0 & \text{ if $\vr{c}=\vr{y}$}\\
    \vr{c}_p-1 & \text{ if $\vr{c}=\vr{c}_p$}\\
    \vr{c}_{p-1}+1+\vr{c}_0 & \text{ if $\vr{c}=\vr{c}_{p-1}$}\\
    \vr{c} & \text{otherwise}
  \end{cases}
  \right)
  $$
  In particular, by executing $\textbf{updateB}_{d,1}$ following Lemma~\ref{lem:updateE} (if $p=1$), we get from $\alpha_k$ a configuration $\beta$ satisfying the lemma. 
\end{proof}

Since the test-free program $\textbf{evalF}_{d}$ is defined as $\textbf{evalF}_{d,1}\textbf{ or }\cdots\textbf{ or }\textbf{evalF}_{d,d}$, we deduce from Lemma~\ref{lem:evalF} and Lemma~\ref{lem:evalE} the following two corollaries.

\begin{corollary}\label{cor:evalF}
  Let $\alpha$ be a conform configuration and $\beta$ be any configuration such that $\alpha\xrightarrow{\textbf{evalF}_{d}}\beta$. Then $\beta$ is conform and $\ibad{\beta}\geq \ibad{\alpha}$. In particular, if $\beta$ is good then $\alpha$ is good. In that case, $\alpha$ satisfies $\vr{b}$ divides $\vr{x}$, denoting by $(v,n)$ the pair encoded by $\alpha$ then $v\not=\zero_d$, $\reduce_d(v,n)$ is encoded by $\beta$, and $\beta(\vr{x}_{d+1})=\alpha(\vr{x}_{d+1})$.
\end{corollary}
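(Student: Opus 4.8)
The plan is to reduce the statement directly to Lemma~\ref{lem:evalF} by unfolding the non-deterministic choice. Since $\textbf{evalF}_d$ is by definition $\textbf{evalF}_{d,1}\textbf{ or }\cdots\textbf{ or }\textbf{evalF}_{d,d}$, the defining clause $\xrightarrow{M_1\textbf{ or }M_2}=\xrightarrow{M_1}\cup\xrightarrow{M_2}$ together with the associativity of non-deterministic choice (noted in Section~\ref{sec:program}) gives $\xrightarrow{\textbf{evalF}_d}\ =\ \bigcup_{p=1}^{d}\xrightarrow{\textbf{evalF}_{d,p}}$. Hence, from $\alpha\xrightarrow{\textbf{evalF}_d}\beta$ I would first fix an index $p\in\{1,\ldots,d\}$ witnessing $\alpha\xrightarrow{\textbf{evalF}_{d,p}}\beta$.

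With $\alpha$ conform and this $p$ in hand, I would then simply apply Lemma~\ref{lem:evalF}. It yields immediately that $\beta$ is conform and that $\ibad{\beta}\geq\ibad{\alpha}$, hence that $\beta$ good forces $\alpha$ good; and, in the good case, it gives all the remaining conclusions verbatim: $\alpha$ satisfies that $\vr{b}$ divides $\vr{x}$, the pair $(v,n)$ encoded by $\alpha$ has $v\neq\zero_d$, the configuration $\beta$ encodes $\reduce_d(v,n)$, and $\beta(\vr{x}_{d+1})=\alpha(\vr{x}_{d+1})$. The one clause of Lemma~\ref{lem:evalF} that the corollary drops is the identification of $p$ as the minimal index with $v[p]>0$; since $p$ is existentially quantified here and does not appear in the statement, this information is simply not recorded.

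There is essentially no obstacle: all the real work — that a single branch $\textbf{evalF}_{d,p}$ preserves conformity, can only increase the badness index, and computes $\reduce_d$ whenever it ends in a good configuration — was already done in Lemma~\ref{lem:evalF}, and passing to the finite union only requires the case split on which branch fired. The only point to keep an eye on is that the conclusions must be phrased without reference to the chosen $p$ (in particular one must not claim the branch index equals the minimal nonzero coordinate of $v$), which is exactly how the corollary is stated; the companion ``existence'' corollary will instead be obtained in the same way from Lemma~\ref{lem:evalE}.
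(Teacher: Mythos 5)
Your proposal is correct and matches the paper exactly: the paper also obtains this corollary directly from Lemma~\ref{lem:evalF} by unfolding the non-deterministic choice $\textbf{evalF}_{d}=\textbf{evalF}_{d,1}\textbf{ or }\cdots\textbf{ or }\textbf{evalF}_{d,d}$ and applying the lemma to the branch that fired. Your remark about dropping the identification of $p$ as the minimal nonzero index is also the right observation.
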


\begin{corollary}\label{cor:evalE}
  Let $\alpha$ be a good configuration satisfying $\vr{b}$ divides $\vr{x}$ and that encodes a pair $(v,n)$ with $v\not=\zero_d$. There exists a good configuration $\beta$ that encodes $\reduce_d(v,n)$ satisfying $\beta(\vr{x}_{d+1})=\alpha(\vr{x}_{d+1})$, and such that $\alpha\xrightarrow{\textbf{evalF}_{d}}\beta$.
\end{corollary}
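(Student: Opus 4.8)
The plan is to reduce the statement immediately to Lemma~\ref{lem:evalE}, using the fact that $\textbf{evalF}_d$ is defined as the non-deterministic choice $\textbf{evalF}_{d,1}\textbf{ or }\cdots\textbf{ or }\textbf{evalF}_{d,d}$. First I would observe that, since $v\neq\zero_d$, there is a minimal index $p\in\{1,\ldots,d\}$ with $v[p]>0$; this is precisely the index that appears in the definition of $\reduce_d(v,n)$.

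Next I would apply Lemma~\ref{lem:evalE} to $\alpha$ for this value of $p$. The hypotheses of that lemma hold as stated: $\alpha$ is a good configuration, $\vr{b}$ divides $\vr{x}$, $\alpha$ encodes $(v,n)$ with $v\neq\zero_d$, and $p$ is the minimal index with $v[p]>0$. The lemma therefore yields a good configuration $\beta$ that encodes $\reduce_d(v,n)$, satisfies $\beta(\vr{x}_{d+1})=\alpha(\vr{x}_{d+1})$, and with $\alpha\xrightarrow{\textbf{evalF}_{d,p}}\beta$.

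The final step is purely definitional: since $\xrightarrow{\textbf{evalF}_d}$ is, by the semantics of non-deterministic choice, the union $\bigcup_{q=1}^{d}\xrightarrow{\textbf{evalF}_{d,q}}$, we have $\xrightarrow{\textbf{evalF}_{d,p}}\,\subseteq\,\xrightarrow{\textbf{evalF}_d}$, and hence $\alpha\xrightarrow{\textbf{evalF}_d}\beta$. This $\beta$ witnesses the corollary. There is no genuine obstacle here: all the difficulty lies in Lemma~\ref{lem:evalE}, which already carries out the delicate construction of a correct execution of $\textbf{evalF}_{d,p}$ together with the companion configurations $\alpha_0,\ldots,\alpha_k$; the corollary merely records that such an execution is, in particular, an execution of $\textbf{evalF}_d$.
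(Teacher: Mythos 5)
Your proposal is correct and matches the paper exactly: the paper also derives this corollary directly from Lemma~\ref{lem:evalE} by choosing the minimal index $p$ with $v[p]>0$ and noting that $\xrightarrow{\textbf{evalF}_{d,p}}$ is contained in $\xrightarrow{\textbf{evalF}_{d}}$ by the semantics of non-deterministic choice.
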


\section{Ackermannian Preamplifiers}\label{sec:ack}
In this section we prove that the following checking program is a $K$-preamplifier for $K=2^{F_{d+1}(n)}$ of size $O(d4^n)$. 
\progline{\textbf{Ack}_{d,n}}{
  \State $\inc{\vr{c}_0}^{(n)};\inc{\vr{c}_d}^{(n+1)}$
  \State $\inc{\vr{x}};\inc{\vr{x}_1,\ldots,\vr{x}_{d}}^{(2^{n})};\inc{\vr{x}_{d+1}}^{(2^{2n+1})}$
  \Loop
  \State $\inc{\vr{x}};\inc{\vr{x}_1,\ldots,\vr{x}_{d}}^{(2^{n})};\inc{\vr{x}_{d+1}}^{(2^{2n+1})}$
  \EndLoop
  \Loop
  \State $\textbf{evalF}_d$
  \EndLoop
  \Loop
  \State \inc{\vr{y}};\dec{\vr{x}_1,\ldots,\vr{x}_{d+1}}
  \EndLoop
  \Loop
  \State \dec{\vr{c}_0}
  \EndLoop
  \State \test{\vr{x}_{d+1},\vr{b}',\vr{c}_0',\vr{c}_0,\ldots,\vr{c}_d}
}

\begin{lemma}
  The checking program $\textbf{Ack}_{d,n}$ is a $K$-preamplifier for $K=2^{F_{d+1}(n)}$.
\end{lemma}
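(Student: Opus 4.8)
\textbf{Setup.} The plan is to verify the two defining clauses of a $K$-preamplifier for $(\vr{x},\vr{y},\vr{b})$ with $K=2^{F_{d+1}(n)}$. First I would record that the deterministic prefix of $\textbf{Ack}_{d,n}$ (its first two statements followed by the first loop) reaches, for every $\ell_0\geq1$ and only for these, the good configuration $\gamma_{\ell_0}$ encoding $((n+1)\unit_{d,d},n)$ with $\vr{x}=\ell_0$; so $\gamma_{\ell_0}$ has $\vr{b}=2^n$, $\vr{x}_i=2^{n}\ell_0$ for $1\le i\le d$, $\vr{x}_{d+1}=2^{2n+1}\ell_0$, and all other counters $0$. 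I would also note that $F_{(n+1)\unit_{d,d}}(n)=F_d^{n+1}(n)=F_{d+1}(n)$, that $(v,m)\mapsto F_v(m)$ is an invariant of $\reduce_d$, that $\redmax_d((n+1)\unit_{d,d},n)=(\zero_d,F_{d+1}(n))$ (Lemma~\ref{lem:fast}), that a good configuration encoding $(v,c)$ satisfies $\vr{x}_{d+1}=2^{\norm v+c}\vr{x}$, and that each $\textbf{evalF}_d$ step preserves $\vr{x}_{d+1}$ and keeps configurations conform (Corollary~\ref{cor:evalF}).

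\textbf{Completeness.} Given $\ell\geq1$, set $\Sigma=F_{d+1}(n)-2n-1$; Lemma~\ref{lem:tech} gives $F_{d+1}(n)\ge 2^{n+1}n+n+1\ge 3n+1$, so $\Sigma\ge n\ge0$. Run the prefix to reach $\gamma_{\ell_0}$ with $\ell_0=2^{\Sigma}\ell$; then $\vr{b}=2^n$ divides $\vr{x}=2^{\Sigma}\ell$ and $\vr{x}_{d+1}=2^{2n+1}\ell_0=2^{F_{d+1}(n)}\ell=K\ell$. Iterate $\textbf{evalF}_d$ along the maximal $\reduce_d$-chain via Corollary~\ref{cor:evalE}: at a good configuration encoding $(v,c)$ with $v\ne\zero_d$, the invariants force $F_v(c)=F_{d+1}(n)$ and $\vr{x}=K\ell/2^{\norm v+c}=2^{F_{d+1}(n)-\norm v-c}\ell$, and since $\norm v\ge1$ Lemma~\ref{lem:tech} yields $F_{d+1}(n)=F_v(c)\ge 2^{\norm v}c+\norm v\ge\norm v+2c$, so $F_{d+1}(n)-\norm v-c\ge c$ and $\vr{b}=2^{c}$ divides $\vr{x}$, as the corollary requires. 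This reaches, after finitely many steps, the good configuration $\delta$ encoding $(\zero_d,F_{d+1}(n))$: $\vr{x}=\ell$, $\vr{b}=2^{F_{d+1}(n)}=K$, $\vr{x}_1=\cdots=\vr{x}_{d+1}=K\ell$, all else $0$. Running the transfer loop $K\ell$ times moves $K\ell$ into $\vr{y}$ and zeroes every $\vr{x}_i$, running the $\dec{\vr{c}_0}$ loop $F_{d+1}(n)$ times zeroes $\vr{c}_0$, and then $\test{\vr{x}_{d+1},\vr{b}',\vr{c}_0',\vr{c}_0,\ldots,\vr{c}_d}$ is enabled; the resulting $\beta$ satisfies $0\xrightarrow{\textbf{Ack}_{d,n}}\beta$, $\vr{x}=\ell$, $\vr{y}=K\ell=\vr{b}\vr{x}$.

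\textbf{Soundness.} Take any $0\xrightarrow{\textbf{Ack}_{d,n}}\beta$ and let $\alpha'$ be the configuration after the $\textbf{evalF}_d$-loop; since the configuration after the prefix is some $\gamma_{\ell_0}$ (hence good), Corollary~\ref{cor:evalF} makes the whole chain to $\alpha'$ conform. The rest of the program moves some $\mu\le\min\{\alpha'(\vr{x}_1),\ldots,\alpha'(\vr{x}_{d+1})\}$ into $\vr{y}$, lowers $\vr{c}_0$, then tests $\vr{x}_{d+1},\vr{b}',\vr{c}_0',\vr{c}_0,\ldots,\vr{c}_d$ to zero; since none of $\vr{b}',\vr{c}_0',\vr{c}_1,\ldots,\vr{c}_d$ is touched after $\alpha'$, the run exists only if $\alpha'(\vr{b}')=\alpha'(\vr{c}_0')=\alpha'(\vr{c}_1)=\cdots=\alpha'(\vr{c}_d)=0$ and $\beta(\vr{x}_{d+1})=\beta(\vr{c}_0)=0$. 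Now I would split on $\ibad{\alpha'}$. If $\alpha'$ is $i$-bad with $i\ge2$, then $\alpha'(\vr{c}_{i-1})=\cdots=\alpha'(\vr{c}_d)=0$ turns the defining relations into $\alpha'(\vr{x}_i)>\alpha'(\vr{x}_{i-1})$ and $\alpha'(\vr{x}_{d+1})=\alpha'(\vr{x}_i)$, so $\mu\le\alpha'(\vr{x}_{i-1})<\alpha'(\vr{x}_{d+1})$ and $\beta(\vr{x}_{d+1})>0$, contradicting the test; so this case cannot occur. If $\alpha'$ is $1$-bad, the defining equalities together with $\alpha'(\vr{c}_0')=\alpha'(\vr{c}_1)=\cdots=\alpha'(\vr{c}_d)=\alpha'(\vr{b}')=0$ give $\alpha'(\vr{x}_1)=\cdots=\alpha'(\vr{x}_{d+1})=2^{\alpha'(\vr{c}_0)}\alpha'(\vr{x})$ and $\alpha'(\vr{b})<2^{\alpha'(\vr{c}_0)}$, so the test forces $\mu=2^{\alpha'(\vr{c}_0)}\alpha'(\vr{x})$, whence $\beta(\vr{y})=2^{\alpha'(\vr{c}_0)}\alpha'(\vr{x})>\alpha'(\vr{b})\alpha'(\vr{x})=\beta(\vr{b})\beta(\vr{x})$ and $\beta$ is supported on $\{\vr{x},\vr{y},\vr{b}\}$. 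If $\alpha'$ is good, then by Corollary~\ref{cor:evalF} applied back along the chain every configuration on it is good and each step implements $\reduce_d$, so $\alpha'$ encodes $\reduce_d^{m}((n+1)\unit_{d,d},n)$ for some $m$; its vector being $\zero_d$, invariance of $F_v(\cdot)$ forces $\alpha'$ to encode $(\zero_d,F_{d+1}(n))$, hence $\alpha'(\vr{b})=K$ and $\alpha'(\vr{x}_1)=\cdots=\alpha'(\vr{x}_{d+1})=K\alpha'(\vr{x})$; the test forces $\mu=K\alpha'(\vr{x})$, so $\beta(\vr{y})=K\alpha'(\vr{x})=\beta(\vr{b})\beta(\vr{x})$, $\beta(\vr{b})=K$, and $\beta$ is supported on $\{\vr{x},\vr{y},\vr{b}\}$. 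In both non-vacuous cases $\beta$ lives on $\{\vr{x},\vr{y},\vr{b}\}$ and $\vr{y}\ge\vr{b}\vr{x}$, with $\vr{y}=\vr{b}\vr{x}$ only in the good case, where $\vr{b}=K$; that is the first clause.

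\textbf{Main obstacle.} The delicate part should be the soundness case split: reading off, from the (in)equalities defining $i$-badness, exactly why the final test is unsatisfiable (for $i\ge2$) or why a strict inequality $\vr{y}>\vr{b}\vr{x}$ is forced (for $i=1$), while tracking precisely which counters the last two loops and the final test do and do not change. On the completeness side the subtle point is the divisibility bookkeeping, i.e.\ keeping $\vr{b}$ a divisor of $\vr{x}$ at every step of the reduction chain; this is exactly where the estimate $F_v(n)\ge 2^{\norm v}n+\norm v$ of Lemma~\ref{lem:tech} is used.
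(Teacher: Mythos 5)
Your proposal is correct and follows essentially the same route as the paper: the same two-clause verification, the same use of Lemma~\ref{lem:tech} to keep $\vr{b}$ dividing $\vr{x}$ along the $\reduce_d$-chain (the paper phrases this as a contradiction with the maximality of the number of $\textbf{evalF}_d$ iterations, you verify it stepwise), and the same soundness case split on the badness index of the configuration after the $\textbf{evalF}_d$-loop, with the final tests killing the $i$-bad cases for $i\geq 2$ and forcing $\vr{y}>\vr{b}\vr{x}$ in the $1$-bad case. No substantive differences.
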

\begin{proof}
  Let $\ell\geq 1$ and let $\beta$ be the configuration satisfying $\beta(\vr{x},\vr{y},\vr{b})=(\ell,K\ell,K)$ and $\beta(\vr{c})=0$ for every counter $\vr{c}\not\in\{\vr{x},\vr{y},\vr{b}\}$. Let us prove that $0\xrightarrow{\textbf{Ack}_{d,n}}\beta$.

  To do so, we consider an execution of $\textbf{Ack}_{d,n}$ defined as follows. We execute the first loop $2^{F_{d+1}(n)-2n-1}\ell-1$ times (recall that $F_{d+1}(n)\geq 2n+1$). Just after this loop we get a good configuration $\rho_0$ that encodes $((0,\ldots,0,n),n+1)$ and such that $\rho_0(\vr{x}_{d+1})=K\ell$. Then we iterate $\textbf{evalF}_d$ as many times as possible following Lemma~\ref{lem:evalE}. This way we get a sequence $\rho_0,\ldots,\rho_k$ of good configurations such that $\rho_j(\vr{x}_{d+1})=2^{F_{d+1}(n)}\ell$ and such that $\rho_j$ encodes the pair $(v_j,n_j)$ defined as $(v_j,n_j)=\reduce_d^j((n+1)\unit_{d,d},n)$. Since $F_{v_j}(n_j)$ does not depend on $j$, we deduce that $F_{v_j}(n_j)=F_{v_0}(n_0)=F_{d}^{n+1}(n)=F_{d+1}(n)$.

  Since $\rho_k$ is a good configuration, it follows that $\rho_k$ satisfies $\vr{b}=2^{\vr{c}_0}$ and $\vr{x}_{d+1}=2^{\vr{c}_d+\cdots+\vr{c}_0}\vr{x}$. As $\rho_k$ encodes $(v_j,n_j)$, we deduce that $\rho_k(\vr{b})=2^{n_j}$ and $\rho_k(\vr{x})=2^{F_{d+1}(n)-n_j-\norm{v_j}}\ell$.

  We have proved the following equality:
  $$\rho_k(\frac{\vr{x}}{\vr{b}})=2^{F_{v_k}(n_k)-2n_k-\norm{v_k}}$$
  Lemma~\ref{lem:tech} shows that if $v_k\not=\zero_d$ then $\rho_k$ satisfies $\vr{b}$ divides $\vr{x}$. In particular from the good configuration $\rho_k$ we can execute $\textbf{evalF}_d$ one more time following Lemma~\ref{lem:evalE} and get a contradiction with the maximality of $k$. It follows that $v_k=\zero_d$. From $F_{v_k}(n_k)=F_{d+1}(n)$, we deduce that $n_k=F_{d+1}(n)$. Hence $\rho_k(\vr{b})=K$.

  Since $v_k=\zero_d$ and $\rho_k$ is a good configuration, we deduce that $\rho_k$ satisfies $\vr{x}_{d+1}=\cdots=\vr{x}_1$. It follows that $\rho_k(\vr{x}_i)=K\ell$ for every $i\in\{1,\ldots,d+1\}$. Based on this observation, from $\rho_k$ we can execute the third loop $K\ell$ times. Then we execute the last loop $\rho_k(\vr{c}_0)$ times. This way, we get the configuration $\beta$.

  Finally, let us consider a configuration $\beta$ such that $0\xrightarrow{\textrm{Ack}_{d,n}}\beta$ and let us prove that $\beta$ satisfies $\vr{y}\geq \vr{b}\vr{x}$ and $\beta(\vr{c})=0$ for every counter $\vr{c}\not\in\{\vr{x},\vr{y},\vr{b}\}$. Moreover, if the inequality is an equality then let us prove that $\beta(\vr{b})=K$.
  
  We denote by $\rho_0$ the configuration we obtain after executing the first loop. Let $k$ be the number of times the second loop is executed and let us denote by $\rho_j$ the configuration we obtain after the $j$th execution of the second loop for $j\in\{1,\ldots,k\}$. Lemma~\ref{lem:evalF} shows that $\rho_j$ is conform.

  Assume by contradiction that $\rho_k$ is $i$-bad for $i\in\{2,\ldots,d+1\}$. In that case $\rho_k$ satisfies $\vr{x}_i+\vr{y}>2^{\vr{c}_{i-1}}(\vr{x}_{i-1}+\vr{y})$ and $\vr{x}_j+\vr{y}=2^{\vr{c}_j}(\vr{x}_{j-1}+\vr{y})$ for every $j\in\{i+1,\ldots,d+1\}$. In particular $\rho_k(\vr{x}_{d+1})>\rho_k(\vr{x}_{i-1})$. It follows that whatever the number of time the third loop is executed, the configuration $\beta$ satisfy $\beta(\vr{x}_{d+1})>\beta(\vr{x}_{i-1})$ and the last test command $\test{\vr{x}_{d+1}}$ fails. We get a contradiction. It follows that $\rho_k$ is $1$-bad or good.

  Now, assume that $\rho_k$ is $1$-bad. In that case $\rho_k$ satisfies $\vr{x}'=\vr{y}=0$, $\vr{x}>0$, $\vr{b}+\vr{b}'<2^{\vr{c}_0+\vr{c}_0'}$, $\vr{x}_1=2^{\vr{c}_0+\vr{c}_0'}\vr{x}$ and $\vr{x}_i=2^{\vr{c}_{i-1}}\vr{x}_{i-1}$ for every $i\in\{2,\ldots,d+1\}$. Since the execution from $\rho_k$ to $\beta$ only modify $\vr{y},\vr{x}_1,\ldots,\vr{x}_{d+1},\vr{c}_0$, then $\beta$ and $\rho_k$ coincides on the other counters. Since $\beta$ successfully execute the last test commands, we deduce that $\rho_k$ satisfies $\vr{b}'=\vr{c}_0'=\vr{c}_1=\cdots=\vr{c}_d=0$. It follows that $\rho_k$ satisfies $\vr{x}_{d+1}=\cdots=\vr{x}_1$. Since $\beta(\vr{x}_{d+1})=0$, it means that third loop was executed $\rho_k(\vr{x}_{d+1})$ times. We deduce that $\beta(\vr{x}_i)=0$ for every $i\in\{1,\ldots,d\}$. It follows that $\beta$ satisfies $\vr{y}>\vr{b}\vr{x}$ and $\beta(\vr{c})=0$ for every counter $\vr{c}\not\in\{\vr{x},\vr{y},\vr{b}\}$.

  Finally, assume that $\rho_k$ is good. From Lemma~\ref{lem:evalF} we deduce that $\rho_j$ is good for every $j\in\{1,\ldots,k\}$ and denoting by $(v_j,n_j)$ the pair encoded by $\rho_j$ we have $(v_j,n_j)=\reduce_d^j(v_0,n_0)$. In particular $F_{v_j}(n_j)=F_{d+1}(n)$. Since $\rho_k$ is good it satisfies $\vr{x}'=\vr{y}=\vr{b}'=\vr{c}_0'=0$, $\vr{x}>0$, $\vr{b}=2^{\vr{c}_0}$, $\vr{x}_1=2^{\vr{c}_0}\vr{x}$ and $\vr{x}_i=2^{\vr{c}_{i-1}}\vr{x}_{i-1}$ for every $i\in\{2,\ldots,d+1\}$. Since the execution from $\rho_k$ to $\beta$ only modify $\vr{y},\vr{x}_1,\ldots,\vr{x}_{d+1},\vr{c}_0$, then $\beta$ and $\rho_k$ coincides on the other counters. Since $\beta$ successfully execute the last test commands, we deduce that $\rho_k$ satisfies $\vr{c}_1=\cdots=\vr{c}_d=0$. It follows that $v_k=\zero_d$ and from $F_{v_k}(n_k)=F_{d+1}(n)$ we get $n_k=F_{d+1}(n)$. Since $\rho_k$ satisfies $\vr{b}=2^{\vr{c}_0}$ we deduce that $\rho_k(\vr{b})=K$. Since $\rho_k$ satisfies $\vr{x}_{d+1}=\cdots=\vr{x}_1$ and $\beta(\vr{x}_{d+1})=0$, it means that third loop was executed $\rho_k(\vr{x}_{d+1})$ times. We deduce that $\beta(\vr{x}_i)=0$ for every $i\in\{1,\ldots,d\}$. We have proved that $\beta$ satisfies $\vr{y}=\vr{b}\vr{x}$, $\rho_k(\vr{b})=K$, and $\rho_k(\vr{c})=0$ for every counter $\vr{c}\not\in\{\vr{x},\vr{y},\vr{b}\}$.
  
  Therefore $\textrm{Ack}_{d,n}$ is a $K$-preamplifier.
\end{proof}

\begin{corollary}\label{cor:smallamp}
  We can compute in time $O(n d4^n)$ a $K$-preamplifier of size $O(n d4^n)$ with $K=2^{F_{d+1}(n)}$ using $2d+6$ counters such that $d$ of them are safe.
\end{corollary}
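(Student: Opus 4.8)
The plan is to take $A := \textbf{Ack}_{d,n}$ as the witnessing preamplifier. The preceding lemma already supplies the only substantive content — that $A$ is a $K$-preamplifier for $K = 2^{F_{d+1}(n)}$ — so nothing further about correctness is needed, and what remains is purely syntactic: reading off from the definition of $A$ the number of counters it uses (and how many of those are safe), the bound on $\size{A}$, and the bound on the time to produce $A$ from $d$ and $n$. Thus the ``hard part'' of this statement is not here at all; it is the preceding lemma, and through it the analysis of $\textbf{evalF}_d$ (Corollary~\ref{cor:evalF} and Corollary~\ref{cor:evalE}).

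For the counters, the key observation is that $A$ introduces no counter beyond those already occurring in $\textbf{evalF}_d$: the lines of $A$ surrounding the call to $\textbf{evalF}_d$ only touch $\vr{x}$, $\vr{y}$, $\vr{c}_0$, and $\vr{x}_1,\ldots,\vr{x}_{d+1}$, all of which are counters of $\textbf{evalF}_d$. A direct enumeration of the counters of $\textbf{evalF}_d$ then yields the claimed total of $2d+6$. To count the safe ones I would use the safe/unsafe dichotomy of Section~\ref{sec:bounded}: a counter of $A$ is unsafe exactly when it lies in the distinguished triple $\{\vr{x},\vr{y},\vr{b}\}$ or occurs in the final test of $A$, which here is \test{\vr{x}_{d+1},\vr{b}',\vr{c}_0',\vr{c}_0,\ldots,\vr{c}_d}; the counters escaping both lists are precisely the auxiliary ones among $\vr{x}',\vr{x}_1,\ldots,\vr{x}_d$, and there are $d$ of them. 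Pinning down this classification — in particular that exactly $d$ counters are safe — is the one place where a little care is needed, since the dimension bounds obtained in later sections depend on it.

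For the size and running time, I would bound $A$ block by block: the initial block $\inc{\vr{x}};\inc{\vr{x}_1,\ldots,\vr{x}_{d}}^{(2^{n})};\inc{\vr{x}_{d+1}}^{(2^{2n+1})}$ and its copy inside the first loop each have size $O(d2^{n}+4^{n})$; the loop body $\textbf{evalF}_d$ is a non-deterministic choice of the $d$ programs $\textbf{evalF}_{d,p}$, each of size $O(d)$, hence of size $O(\poly(d))$; and the remaining initializations, the trailing draining loops, and the final test together have size $O(n+d)$. Summing these contributions gives $\size{A} = O(nd4^{n})$, the dominant term being the $O(4^{n})$ coming from the explicit $\inc{\cdot}^{(2^{2n+1})}$ blocks. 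Since $A$ is produced from $d$ and $n$ by a straightforward syntactic expansion, the time to write it out is linear in $\size{A}$, hence also $O(nd4^{n})$, which would complete the argument.
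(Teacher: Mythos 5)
There is a genuine gap: a direct enumeration of the counters of $\textbf{evalF}_d$ (and hence of $\textbf{Ack}_{d,n}$) gives $2d+8$, not $2d+6$. The list is $\vr{x},\vr{x}'$ (two), $\vr{x}_1,\ldots,\vr{x}_{d+1}$ ($d+1$), $\vr{y},\vr{b},\vr{b}',\vr{c}_0,\vr{c}_0'$ (five), and $\vr{c}_1,\ldots,\vr{c}_d$ ($d$). Your safe-counter count has the same off-by-one flavour: the counters outside $\{\vr{x},\vr{y},\vr{b}\}$ and outside the final test $\test{\vr{x}_{d+1},\vr{b}',\vr{c}_0',\vr{c}_0,\ldots,\vr{c}_d}$ are $\vr{x}',\vr{x}_1,\ldots,\vr{x}_d$, which is $d+1$ counters, not $d$. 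So taking $A=\textbf{Ack}_{d,n}$ as is does not prove the statement.

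The missing idea, which is the whole point of the paper's proof, is that two counters must be eliminated from $\textbf{Ack}_{d,n}$. First, $\vr{c}_d$ is initialized to $n$ and thereafter only decremented, so its value can be hard-coded into the control structure by unfolding the program $n$ times; this is exactly where the extra factor of $n$ in the size bound $O(nd4^n)$ comes from (without it, $\size{\textbf{Ack}_{d,n}}=O(d4^n)$, and your derivation of $O(nd4^n)$ is only a gratuitous weakening of that). Second, once $\vr{c}_d$ is in the control structure, the safe counter $\vr{x}_d$ can be removed because for the relevant configurations its value equals $\vr{x}_{d+1}2^{-\vr{c}_d}$ and is therefore recoverable from $\vr{x}_{d+1}$ and the hard-coded $\vr{c}_d$. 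Removing one unsafe counter ($\vr{c}_d$) and one safe counter ($\vr{x}_d$) brings the totals to $2d+6$ counters with exactly $d$ safe ones, as claimed. Your proposal reaches the right numbers only through two compensating miscounts, and without the unfolding argument the dimension bound $2d+4$ in Theorem~\ref{thm:main:fix} would degrade to $2d+6$.
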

\begin{proof}
  The checking program $\textrm{Ack}_{d,n}$ is a $K$-preamplifier of size $O(d4^n)$ computable in time $O(d4^d)$ using $2d+8$ counters such that $\vr{x}',\vr{x}_1,\ldots,\vr{x}_d$ are safe counters. Notice that the counter $\vr{c}_d$ is first initialized to $n$ and then during an execution it is only decremented. It follows that its value can be encoded in the control structure of the program, i.e. by unfolding the program $n$ times. Moreover, notice that during the execution, the value of $\vr{x}_{d}+\vr{y}$ can only increase. It follows that the counter $\vr{x}_d$ can be removed as well by observing that its value (for good configuration) is in fact equals to $\vr{x}_{d+1}2^{-\vr{c}_d}$ and the value of $\vr{c}_d$ is hard coded in the control structure. This way, we get a $K$-preamplifier satisfying the lemma. 
\end{proof}

\section{Complexity Classes Beyond Elementary}\label{sec:complexity}
The reachability problem for general programs equipped with the bounded semantics provides a way to define complexity classes beyond Elementary. In fact, following \cite{Schmitz16toct}, the problem that asks, given a $2$-dimensional general program $M$ of size $n$, whether there exists $\beta$ such that $0\xrightarrow{M}_{\leq K}\beta$ with $K=2^{F_d(n)}$ is $\mathbb{F}_d$-complete for every $d\geq 3$. We deduce as a direct corollary the following theorem.
\begin{theorem}\label{thm:main:fix}
  The reachability problem for $(2d+4)$-dimensional checking programs is $\mathbb{F}_d$-hard for any $d\geq 3$. In particular the reachability problem for $10$-dimensional checking programs is not Elementary.
\end{theorem}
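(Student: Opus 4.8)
The plan is to derive the theorem directly from the $\mathbb{F}_d$-completeness result of~\cite{Schmitz16toct} recalled just above, combined with the small Ackermannian preamplifier of Corollary~\ref{cor:smallamp} and the preamplifier/program composition $\rhd$ of Section~\ref{sec:bounded}. Fix $d\geq 3$. The source of the reduction is the $\mathbb{F}_d$-complete problem asking, given a $2$-dimensional general program $M$ of size $n$, whether there exists a configuration $\beta$ with $0\xrightarrow{M}_{\leq K}\beta$, where $K=2^{F_d(n)}$.

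First I would instantiate Corollary~\ref{cor:smallamp} with the parameter $d-1$ in place of $d$: this produces, in time $O(n(d-1)4^n)$, a $K$-preamplifier $A$ with $K=2^{F_{(d-1)+1}(n)}=2^{F_d(n)}$ that uses $2(d-1)+6=2d+4$ counters, of which $s:=d-1$ are safe and $u:=d+5$ are unsafe. Then I would form the checking program $A\rhd M$ of Section~\ref{sec:bounded} (the $\rhd$ construction already normalizes $M$ so that it starts with a test command, which changes nothing for $K$-bounded reachability from the zero configuration). Since $M$ has dimension at most $2\leq d-1=s$ (this is where $d\geq 3$ enters), the counters of $M$ can be folded into the safe counters of $A$, so the dimension of $A\rhd M$ is $u+\max(s,\dim M)=(d+5)+(d-1)=2d+4$; moreover, by the correctness of $\rhd$, for every $\beta$ we have $0\xrightarrow{M}_{\leq K}\beta$ if, and only if, $0\xrightarrow{A\rhd M}\beta$. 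Consequently $M$ is a positive instance of the $K$-bounded reachability problem exactly when $A\rhd M$ is a positive instance of the reachability problem for $(2d+4)$-dimensional checking programs.

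It then suffices to note that the map $M\mapsto A\rhd M$ is an elementary many-one reduction: for fixed $d$ the output has size $\size{A}+O(d\,\size{M})=O(n 4^n)$ and is computed within this bound. As the complexity classes $\mathbb{F}_d$ with $d\geq 3$ are closed under elementary reductions~\cite{Schmitz16toct}, $\mathbb{F}_d$-hardness transfers from the source to the target, which proves the first part of the theorem. Taking $d=3$ yields $\mathbb{F}_3$-hardness in dimension $2\cdot 3+4=10$; since $\mathbb{F}_3=\tower$ strictly contains Elementary, the reachability problem for $10$-dimensional checking programs is not Elementary.

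The only point requiring care is the bookkeeping, not any new idea: the parameter must be shifted by one so the amplifier outputs $K=2^{F_d(n)}$ rather than $2^{F_{d+1}(n)}$, and one must verify that the $2d+4$ counters of $A$ together with the at most two counters of $M$ (reused inside the $d-1$ safe counters) still total $2d+4$, which is exactly what the inequality $\dim M\leq 2\leq d-1$ guarantees. All the substantive work — the Ackermannian preamplifier and the soundness and completeness of $A\rhd M$ — has already been done in the preceding sections.
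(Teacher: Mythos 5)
Your proposal is correct and follows essentially the same route as the paper's proof: instantiate Corollary~\ref{cor:smallamp} at parameter $d-1$ to obtain a $2^{F_d(n)}$-preamplifier with $2d+4$ counters of which $d-1$ are safe, fold the two counters of $M$ into the safe counters (using $d\geq 3$), and conclude via the correctness of $A\rhd M$ and the $\mathbb{F}_d$-completeness of bounded reachability. Your explicit accounting of the unsafe/safe counter split and of the reduction's elementary cost only makes the same argument more detailed.
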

\begin{proof}
  Given a $2$-dimensional program $M$, let $n=\size{M}$. Corollary~\ref{cor:smallamp} shows that we can compute in time $O(n d4^n)$ a $K$-preamplifier $A$ of size $O(n d4^n)$ with $K=2^{F_{d}(n)}$ using $2(d-1)+6$ counters such that $d-1$ of them are safe. Now recall that the checking program $N$ defined as $A\rhd M$ is computable in time $\size{A}+O(\size{M})$. It follows that $N$ is computable in time $O(n d4^n)$. Moreover, as the number of safe counters of $A$ is larger than or equal to $2$, we deduce that the dimension of $N$ is bounded by $2d+4$. Finally, just observe that for every configuration $\beta$ we have $0\xrightarrow{M}_{\leq K}\beta$ if, and only if, $0\xrightarrow{N}\beta$.
\end{proof}
Theorem~\ref{thm:main} is nearly optimal since in \cite{DBLP:conf/lics/LerouxS19} it is proved that the reachability problem for $d$-dimensional checking programs is in $\mathbb{F}_{d+4}$.

\medskip

Finally, let us recall~\cite{Schmitz16toct} that the problem that asks, given a $2$-dimensional general program $M$ of size $n$, whether there exists a configuration $\beta$ such that $0\xrightarrow{M}_{\leq K}\beta$ with $K=2^{F_\omega(n)}$ is $\mathbb{F}_\omega$-complete. We deduce as a direct corollary the following theorem.
\begin{theorem}\label{thm:main}
  The reachability problem for checking programs is $\mathbb{F}_\omega$-hard.
\end{theorem}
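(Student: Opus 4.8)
The plan is to mirror the proof of Theorem~\ref{thm:main:fix}, replacing the fixed level by the diagonal level $n=\size{M}$ so that $F_{d+1}(n)$ becomes $F_{n+1}(n)=F_\omega(n)$. I start from the $\mathbb{F}_\omega$-complete problem recalled just above: given a $2$-dimensional general program $M$ of size $n$, decide whether there exists a configuration $\beta$ with $0\xrightarrow{M}_{\leq K}\beta$ for $K=2^{F_\omega(n)}$. Given such an $M$, I set $n=\size{M}$ and $d=n$, so that $F_{d+1}(n)=F_{n+1}(n)=F_\omega(n)$ and hence $2^{F_{d+1}(n)}=K$. Since $\size{M}\geq 1$ we always have $d=n\geq 1$, so Corollary~\ref{cor:smallamp} applies with this choice of $d$.

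By Corollary~\ref{cor:smallamp} I can compute in time $O(n\,d\,4^n)=O(n^2 4^n)$ a $K$-preamplifier $A$ for the triple $(\vr{x},\vr{y},\vr{b})$ of size $O(n^2 4^n)$, using $2n+6$ counters among which $n$ are safe. I then form the checking program $N=A\rhd M$. By the construction of Section~\ref{sec:bounded}, $N$ is computable in time $\size{A}+O(\dim(M)\,\size{M})=O(n^2 4^n)$ (here $\dim(M)=2$), and for every configuration $\beta$ one has $0\xrightarrow{M}_{\leq K}\beta$ if and only if $0\xrightarrow{N}\beta$. In particular, some $\beta$ is reachable in the $K$-bounded semantics of $M$ exactly when some $\beta$ is reachable for the checking program $N$. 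The dimension of $N$ is $(n+6)+\max(n,2)$, which is no longer bounded, but this is irrelevant here since we are not fixing the dimension.

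Finally, the map $M\mapsto N$ is a reduction computable in time $2^{O(n)}$, hence elementary, from an $\mathbb{F}_\omega$-complete problem to the reachability problem for checking programs; therefore the latter is $\mathbb{F}_\omega$-hard. The only points requiring care — and where essentially all of the real work has already been carried out in the earlier sections — are the diagonalisation choice $d=n$, which makes $F_{d+1}(n)$ coincide with $F_\omega(n)$ via $F_\omega(n)=F_{n+1}(n)$, and the verification that the reduction stays elementary so that it witnesses $\mathbb{F}_\omega$-hardness. I do not expect a genuine obstacle: Corollary~\ref{cor:smallamp} and the $A\rhd M$ construction already encapsulate the Ackermannian preamplifier and the test-postponement machinery, so the argument reduces to this short bookkeeping.
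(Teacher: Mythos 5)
Your proposal is correct and follows essentially the same route as the paper's own proof: instantiate Corollary~\ref{cor:smallamp} with $d=n=\size{M}$ so that $K=2^{F_{n+1}(n)}=2^{F_\omega(n)}$, form $N=A\rhd M$, and observe that the reduction is elementary. The only additions beyond the paper's argument are minor bookkeeping checks (e.g. $n\geq 1$, the dimension of $N$), which are harmless.
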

\begin{proof}
  Given a $2$-dimensional program $M$, let $n=\size{M}$. Corollary~\ref{cor:smallamp} shows that we can compute in time $O(n^24^n)$ a $K$-preamplifier $A$ of size $O(n^24^n)$ with $K=2^{F_{n+1}(n)}=2^{F_\omega(n)}$. Now recall that the checking program $N$ defined as $A\rhd M$ is computable in time $\size{A}+O(\size{M})$. It follows that $N$ is computable in time $O(n^24^n)$. Finally, just observe that for every configuration $\beta$ we have $0\xrightarrow{M}_{\leq K}\beta$ if, and only if, $0\xrightarrow{N}\beta$.
\end{proof}
Theorem~\ref{thm:main} is optimal since in \cite{DBLP:conf/lics/LerouxS19} it is proved that the reachability problem for checking programs is in $\mathbb{F}_\omega$.

\section{Conclusion}
This paper proves that the reachability problem for checking programs is Ackermannian-complete. It also reduces the gap for the parameterized complexity of the reachability problem in fixed dimension. In order to close this gap, we see two possible research directions:
\begin{itemize}
\item Either we find a primitive recursive algorithm computing from $d,n\in\setN$ a $K$-preamplifier for $K=F_d(n)$ with a dimension $d+O(1)$,
\item Or we find a new algorithm for deciding the reachability problem with a complexity upper bound in $\mathbb{F}_{\frac{d}{2}+O(1)}$.
\end{itemize}

\section*{Acknowledgements}
I gratefully thank Philippe Schnoebelen for helpful comments and encouraging feedback.

\bibliographystyle{plainnat}
\bibliography{biblio}

\end{document}